\newtheorem{theorem}{Theorem}
\newtheorem{definition}[theorem]{Definition}
\newtheorem{lemma}[theorem]{Lemma}
\newtheorem{notation}[theorem]{Notation}
\newtheorem{proposition}[theorem]{Proposition}
\newenvironment{proof}[1][Proof]{\noindent\textbf{#1.} }{\ \rule{0.5em}{0.5em}}
\begin{document}

\title{The Underlying Dynamics of Credit Correlations}
\author{Arthur Berd\thanks{BlueMountain Capital Management. }
\and Robert Engle\thanks{Department of Finance, Stern School of Business, New York
University.}
\and Artem Voronov\thanks{Department of Economics, New York University.}}
\date{First draft: June, 2005\\
This draft: April, 2007}
\maketitle

\begin{abstract}
We propose a hybrid model of portfolio credit risk where the dynamics of the
underlying latent variables is governed by a one factor GARCH process. The
distinctive feature of such processes is that the long-term aggregate return
distributions can substantially deviate from the asymptotic Gaussian limit for
very long horizons. We introduce the notion of correlation surface as a
convenient tool for comparing portfolio credit loss generating models and
pricing synthetic CDO\ tranches. Analyzing alternative specifications of the
underlying dynamics, we conclude that the asymmetric models with TARCH
volatility specification are the preferred choice for generating significant
and persistent credit correlation skews. The characteristic dependence of the
correlation skew on term to maturity and portfolio hazard rate in these models
has a significant impact on both relative value analysis and risk management
of CDO tranches.

\end{abstract}

\section{Introduction}

The latest advances in credit correlation modeling were in part motivated by
the growth and sophistication of the so called correlation trading strategies,
in particular those involving the standard tranches referencing the Dow Jones
CDX (US) and iTraxx (Europe) broad market CDS indexes. The synthetic CDO
market allows investors to take views on the shape of the credit loss
distribution of the underlying collateral portfolio. The market implied
portfolio loss distribution is now well exposed through the pricing of liquid
standard tranches, which in turn are expressed through their implied correlations.

The pricing of credit derivatives has been based on either structural Merton
style models or reduced form credit migration models. See Lando \cite{Lando}
and Schonbucher \cite{SchonbucherBook} for a survey of these techniques. In
both cases, rather ad hoc models of dependence are needed to explain why
correlations vary over time and across tranches. Frequently, specific copulas
are postulated to model prices at a point in time. However, this approach does
not easily generalize to dynamic situations where new information is
continually being revealed and prices of tranches of different maturities
evolve consistently with each other and with the underlying reference portfolio.

This paper brings the standard time series methodology into the portfolio
credit risk setting. Specifically, we consider a structural credit model where
the latent asset variables evolve according to a one factor multi-variate
asymmetric GARCH model. This model is formulated and estimated on high
frequency data and the implications for long horizon loss distributions are
derived by time aggregation. We demonstrate that the asymmetric GARCH
specification can generate multivariate return distributions with both
significant lower tail dependence and asymmetry that persist for very long
time horizons. We then show that a multiname credit portfolio loss
distribution derived from our model can produce implied correlation skews
similar to the ones observed in the synthetic CDO market.

The article is organized as follows. In section \ref{section_portfolio_credit}
we give a brief overview of the portfolio credit modeling in a general copula
framework, and introduce the notion of correlation surface as a generalization
of base correlation to a dynamic setting. We derive analytical formulas
expressing the portfolio loss distribution directly in terms of the shape of
the correlation surface. In section \ref{section_equity_return} we apply time
series models to the portfolio credit risk problem. We develop new
implications of the well known asymmetric threshold GARCH (TARCH) models and
show their power in explaining persistent non-Gaussian features of long
horizon market returns. We then demonstrate the implications of using such
latent variable specification for default correlation estimates. In section
\ref{section_model_comparison} we use the correlation surface estimates to
assess the ability of various static and dynamic credit loss-generating models
to produce realistic pricing of CDO tranches. In section
\ref{section_conclusions} we summarize the results and outline possible
applications and extensions of our approach.

\section{\label{section_portfolio_credit}Defining Portfolio Credit Risk
Models}

In this section we will introduce the notion of correlation surface as a
general tool for comparing portfolio credit risk models. We will first give a
brief overview of the general copula framework, focusing particularly on the
symmetric one factor latent variable assumption and the large homogeneous
portfolio approximation which will be used throughout this paper. We will then
formalize the definition of the correlation surface and show that, under the
above approximation, it contains sufficient information to recover the full
portfolio loss distribution and to price CDO tranches.

\subsection{\label{section_copula}General Copula Framework}

Consider a portfolio of $M$ credit-risky obligors in a static setup with a
fixed time horizon $[0,T]$. To simplify notations we will skip the time
subscript for time dependent variables. At time $t=0$ all $M$ obligors are
assumed to be in non-default state and at time $T$ firm $i$ is in default with
probability $p_{i}.$ We assume we know the individual default probabilities
$\mathbf{p}=\left[  p_{1},...,p_{M}\right]  ^{\prime}$ (either risk-neutral,
e.g. inferred from default swap quotes, or actual, e.g. estimated by rating
agencies). Let $\tau_{i}\geq0$ be the random default time of obligor $i$ and
$Y_{i}=1_{\{\tau_{i}\leq T\}}$ the default dummy variable which is equal to 1
if default happened before $T$ and $0$ otherwise.

The loss generated by obligor $i$ conditional on its default is denoted as
$l_{i}>0$. The loss $l_{i}$ is a product of the total exposure $n_{i}$ and
percentage loss given default $1-\overset{\_}{R}_{i}$ where $\overset{\_}%
{R}_{i}\in\lbrack0,1]$ is the recovery rate. We assume that all $l_{i}$ are
constant (see \cite{Anderson-Sidenius} for discussion on stochastic
recoveries). Portfolio loss $L_{M}$ at time $T$ is the sum of the individual
losses for the defaulted obligors%
\begin{equation}
L_{M}=\sum_{i=1}^{M}l_{i}1_{\{\tau_{i}\leq T\}}=\sum_{i=1}^{M}l_{i}Y_{i}%
\end{equation}

The mean loss of the portfolio can be easily calculated in terms of individual
default probabilities:%
\begin{equation}
E\left(  L_{M}\right)  =\sum_{i=1}^{M}l_{i}E\left(  Y_{i}\right)  =\sum
_{i=1}^{M}l_{i}p_{i} \label{ELossPorfolio}%
\end{equation}

Risk management and pricing of derivatives contingent on the loss of the
credit portfolio, such as CDO\ tranches, require knowing not only the mean but
the whole distribution of losses with cdf $F_{L}\left(  x\right)  =P\left(
L_{M}\leq x\right)  .$ Portfolio loss distribution depends on the joint
distribution of default indicators $\mathbf{Y}=\left[  Y_{1},...,Y_{M}\right]
^{\prime}$ and in a static setup can be conveniently modeled using the latent
variables approach \cite{Frey-McNeil-2001}. Particularly, to impose structure
on the joint distribution of default indicators we assume that there exists a
vector of $M$ real-valued random variables $\mathbf{R}=\left[  R_{1}%
,...,R_{M}\right]  ^{\prime}$ and $M$ dimensional vector of non-random default
thresholds $\mathbf{d}=\left[  d_{1},...,d_{M}\right]  ^{\prime}$ such that
\begin{equation}
Y_{i}=1\Longleftrightarrow R_{i}\leq d_{i}\text{ for }i=1,...,M
\end{equation}

Denote $F:\mathbb{R}^{M}\rightarrow\lbrack0,1]$ as a cdf of $\mathbf{R}$ and
assume that it is a continuous function with marginal cdf $\left\{
F_{i}\right\}  _{i=1}^{M}$. For each obligor $i$ the default threshold $d_{i}$
is calibrated to match the obligor's default probability $p_{i}$ by inverting
the cdf of its aggregate returns $R_{i}:$ $d_{i}=F_{i}^{-1}\left(
p_{i}\right)  $. According to Sklar's theorem \cite{Sklar}, under the
continuity assumption $F$ can be uniquely decomposed into marginal cdfs
$\left\{  F_{i}\right\}  _{i=1}^{M}$ and the $M$-dimensional copula
$C:[0,1]^{M}\Rightarrow\lbrack0,1]$
\begin{equation}
F\left(  \mathbf{d}\right)  =C\left(  F_{1}\left(  p_{i}\right)
,..,F_{M}\left(  p_{M}\right)  \right)
\end{equation}

The most popular copula choices are the Gaussian copula model \cite{Li-2000},
Student-t \cite{Mashal-Naldi-Zeevi-2003}, and double-t \cite{HullWhite2004}.
The choice of copula $C$ defines the joint distribution of default indicators
from which the portfolio loss distribution can be calculated. The number of
names in the portfolio can be large and therefore the calibration of the
copula parameters can be problematic. To reduce the number of parameters some
form of symmetry is usually imposed on the distribution of default indicators.
Gordy \cite{GordyM} and Frey and McNeil \cite{Frey-McNeil-2001} discuss the
mathematics behind the modeling of credit risk in homogeneous groups of
obligors and the equivalence of the homogeneity assumption to the factor
structure of default generating variables.

\textbf{Assumption 1(Symmetric One Factor Model)}\emph{: Assume that loss
given default }$l_{i}=(1-\overset{\_}{R}_{i})\cdot n_{i}$\emph{ and individual
default probabilities }$p_{i}$\emph{ are the same for all }$M$\emph{ names in
the portfolio and that the latent variables admit symmetric linear one factor
representation:}%
\begin{subequations}
\begin{align}
n_{i}  &  =n\\
\overset{\_}{R}_{i}  &  =\overset{\_}{R}\\
p_{i}  &  =p\\
R_{i}  &  =bR_{m}+\sqrt{1-b^{2}}E_{i}\text{ with }0\leq b\leq1
\end{align}
\emph{where }$R_{m}$\emph{ and }$E_{i}$\textbf{ }\emph{are independent zero
mean, unit variance random variables. }$E_{i}^{\prime}$\emph{-s are
identically distributed with cdf }$G(\bullet)$\emph{.}

Within this framework, denote:
\end{subequations}
\begin{itemize}
\item $F\left(  d_{i}\right)  \equiv P\left(  R_{i}\leq d_{i}\right)  $ cdf of
aggregate total returns $R_{i}$

\item $G\left(  d_{i}\right)  \equiv P\left(  E_{i}\leq d_{i}\right)  $ cdf of
aggregate idiosyncratic returns $E_{i}$

\item $F\left(  \mathbf{d}\right)  \equiv P\left(  \mathbf{R}\leq
\mathbf{d}\right)  $ joint cdf of $\mathbf{R}$

\item $C\left(  \mathbf{u}\right)  \equiv F\left(  F^{-1}\left(  u_{1}\right)
,...,F^{-1}\left(  u_{M}\right)  \right)  $ copula of $\mathbf{R}$
\end{itemize}

Note than the assumption of one factor structure implies that equity returns
$\mathbf{R}$ are independent conditional on the market return $R_{m}$ and
therefore $F\left(  \mathbf{d}\right)  $ can be computed as expectation of the
product of conditional cdfs:%
\begin{equation}
F\left(  \mathbf{d}\right)  =E\left(
{\displaystyle\prod\limits_{i=1}^{M}}
P\left(  R_{i}\leq d_{i}|R_{m}\right)  \right)  =E\left(
{\displaystyle\prod\limits_{i=1}^{M}}
G\left(  d_{i}-b_{i}R_{m}\right)  \right)
\end{equation}

Parameter $b$ defines the pairwise correlation of latent variables with the
market factor. The correlation of latent variables, $\rho,$ which is often
referred to as "asset correlation" (this naming reflects the interpretation of
latent variables as asset returns in Merton-style structural default models),
is constant across all pairs of assets in a symmetric single factor model:%

\begin{equation}
\rho_{ij}=\rho=b^{2} \label{asset_corr}%
\end{equation}

In addition to the asset correlation, which reflects the co-movement of
returns on small scale, multivariate distributions can be also characterized
by measures that reflect joint extreme movements for a pair of assets -- the
tail dependence coefficient $\lambda_{ij}^{d}$ and the pairwise default
correlation coefficient $\rho_{ij}^{d}\left(  p\right)  $. Suppose $R_{i}$ and
$R_{j}$ are the stock returns for companies $i$ and $j$ over the $[0,T]$ time
horizon. The coefficient of lower tail dependence and the default correlation
coefficient for two random variables with the same continuous marginal cdfs,
$F\left(  R\right)  ,$ and the same default probabilities, $p,$are defined as:%
\begin{align}
\lambda_{ij}^{d}  &  =\lim_{p\rightarrow+0}P\left(  R_{i}\leq d_{p}|R_{j}\leq
d_{p}\right)  =\lim_{p\rightarrow+0}\frac{C\left(  p,p\right)  }%
{p}\label{tail_dependence_coeff}\\
\rho_{ij}^{d}\left(  p\right)   &  =corr(1_{\left\{  R_{i}\leq d_{p}\right\}
},1_{\left\{  R_{j}\leq d_{p}\right\}  })=\frac{C\left(  p,p\right)  -p^{2}%
}{(1-p)p} \label{pairwise_default_corr}%
\end{align}
where $p$ is the probability of crossing the threshold (also interpreted as
the default probability), and is related to the latter via the relationship
$d_{p}=F^{-1}\left(  p\right)  .$ Both these measures depend only on the
bivariate copula of the two random variables and are asymptotically equal:
$\underset{p\rightarrow+0}{\lim}\rho_{ij}^{d}\left(  p\right)  =\lambda
_{ij}^{d}$. Generally speaking, the measures of small-scale and extreme
co-movement of assets are independent of each other. For example, it is quite
possible to have $\rho_{ij}=0$ and $\lambda_{ij}^{d}\neq0$ and vice versa for
a non-Gaussian multi-variate distribution. Embrecht \textit{et al}
\cite{Embrechts-Lindskog-McNeil-2001} provide very detailed introduction to
the properties of those dependence measures.

To simplify the calculations even more, the large homogenous portfolio (LHP)
approximation is often used. Suppose that we increase the number of names in
the portfolio while keeping the total exposure size of the portfolio constant
so that $n_{i}=N/M$. Conditional on $R_{m}$ the loss of the portfolio contains
the mean of independent identically distributed random variables,
$L_{M}=\left(  1-\overset{\_}{R}\right)  N\frac{1}{M}\sum_{i=1}^{M}1_{\left\{
R_{i}\leq d\right\}  },$ which a.s. converges to its conditional expectation
as $M$ increases to infinity. We use $L$ without subscript to denote the
portfolio loss under LHP\ assumption.

\begin{proposition}
(\textbf{LHP Loss)} Under Assumption 1%
\begin{align}
L  &  \equiv\lim_{M\rightarrow\infty}\left[  \left(  1-\overset{\_}{R}\right)
N\frac{1}{M}\sum_{i=1}^{M}1_{\left\{  R_{i}\leq d\right\}  }\right]  =\left(
1-\overset{\_}{R}\right)  NP\left(  R_{i}\leq d|R_{m}\right) \label{LHP_L}\\
&  =\left(  1-\overset{\_}{R}\right)  NG\left(  \frac{d-bR_{m}}{\sqrt{1-b^{2}%
}}\right)  \text{ a.s. for any }R_{m}\in\text{supp}\left(  G\right) \nonumber
\end{align}

\end{proposition}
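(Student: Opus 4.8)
The plan is to condition on the market factor $R_{m}$ and reduce the claim to the strong law of large numbers applied to the idiosyncratic part. Fix a value $r$ in the support of $R_{m}$. Under Assumption 1, once $R_{m}=r$ is fixed one has $R_{i}=br+\sqrt{1-b^{2}}\,E_{i}$, and the $E_{i}$ are i.i.d.\ with cdf $G$ and independent of $R_{m}$. Hence, conditionally on $R_{m}=r$, the default indicators $\xi_{i}:=1_{\{R_{i}\leq d\}}$ are i.i.d.\ Bernoulli random variables with common success probability
\[
q(r):=P\!\left(R_{i}\leq d\mid R_{m}=r\right)=P\!\left(E_{i}\leq\frac{d-br}{\sqrt{1-b^{2}}}\right)=G\!\left(\frac{d-br}{\sqrt{1-b^{2}}}\right),
\]
where $\sqrt{1-b^{2}}>0$ since $b<1$ (the boundary case $b=1$ is treated separately below).

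Next I would invoke Kolmogorov's strong law of large numbers: since $\xi_{1},\xi_{2},\dots$ are i.i.d.\ and bounded, hence integrable, with mean $q(r)$, the sample averages converge, $\frac{1}{M}\sum_{i=1}^{M}\xi_{i}\to q(r)$ a.s.\ as $M\to\infty$. Multiplying by the deterministic constant $\left(1-\overline{R}\right)N$ gives
\[
L_{M}=\left(1-\overline{R}\right)N\frac{1}{M}\sum_{i=1}^{M}1_{\{R_{i}\leq d\}}\longrightarrow\left(1-\overline{R}\right)N\,q(r)\qquad\text{a.s.},
\]
which is exactly the asserted identity because $q(r)=P(R_{i}\leq d\mid R_{m}=r)=G\big((d-br)/\sqrt{1-b^{2}}\big)$. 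Since this holds for every fixed admissible $r$, it holds ``a.s.\ for any $R_{m}$'' in the sense of the statement; put differently, working on the product space $\Omega_{m}\times\Omega_{E}$, for each $\omega_{m}$ the exceptional $\omega_{E}$-null set is the one supplied by the SLLN.

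The only real subtlety here is the bookkeeping of the exceptional null sets and the passage to the normalized portfolio loss $L_{M}$; there is no analytic obstacle, and the a.s.\ convergence is immediate once the conditioning is set up correctly. I would, however, flag two boundary cases for completeness. If $b=1$ then $R_{i}=R_{m}$ for all $i$, the average equals $1_{\{R_{m}\leq d\}}$ identically, and the formula is read with the conventions $G(+\infty)=1$, $G(-\infty)=0$, so that $G\big((d-R_{m})/0^{+}\big)$ is interpreted as $1_{\{R_{m}\leq d\}}$, matching the right-hand side. If $b=0$ the $R_{i}=E_{i}$ are already i.i.d.\ and independent of $R_{m}$, and both sides collapse to $\left(1-\overline{R}\right)N\,G(d)$. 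This completes the argument.
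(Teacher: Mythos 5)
Your argument is correct and is exactly the standard conditional strong-law-of-large-numbers proof: the paper itself does not spell out a proof but simply cites Proposition 4.5 of Frey and McNeil, whose argument is the same conditioning-on-$R_{m}$ plus SLLN (with the Fubini bookkeeping of null sets you mention). Your treatment of the degenerate cases $b=0$ and $b=1$ is a reasonable extra precaution but not needed beyond what the cited result covers.
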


\begin{proof}
see proposition 4.5 in \cite{Frey-McNeil-2001}
\end{proof}

Based on (\ref{LHP_L}) cdf of $L$ can be expressed in terms of the cdf of
$R_{m}$%
\begin{align}
P\left(  L\leq l\right)   &  =P\left(  R_{m}\geq d_{1}\left(  l\right)
\right) \\
d_{1}\left(  l\right)   &  =\frac{d}{b}-\frac{\sqrt{1-b^{2}}}{b}G^{-1}\left(
\frac{l}{\left(  1-\overset{\_}{R}\right)  N}\right)
\end{align}

We use the following notation for the Gaussian distribution:

\begin{notation}
Let $\Phi\left(  .\right)  $ and $\phi(.)$ with one agrument denote the cdf
and pdf, correspondingly, of a standard normal random variable. Let
$\Phi\left(  .,.;\rho\right)  $ and $\phi(.,.;\rho)$ with three arguments
denote cdf and pdf of two standard normal random variables with linear
correlation $\rho$, and $\Phi_{i}\left(  .,.;\rho\right)  $ denote the partial
derivative of $\Phi\left(  .,.;\rho\right)  $ with respect to the i'th
agrument, e.g. $\Phi_{3}\left(  .,.;\rho\right)  \equiv\frac{\partial
}{\partial\rho}\Phi\left(  .,.;\rho\right)  .$
\end{notation}

For the Gaussian copula we have the familiar formula for LHP loss first
derived by Vasicek \cite{Vasicek-LHP}, where we have substituted the asset
correlation parameter $\rho$ in place of the factor loading $b$ using the
relation (\ref{asset_corr}):%
\begin{gather}
L^{G}=\left(  1-\overset{\_}{R}\right)  N\Phi\left(  \frac{\Phi^{-1}\left(
p\right)  -\sqrt{\rho}R_{m}}{\sqrt{1-\rho}}\right) \\
P\left(  L\leq l\right)  =1-\Phi\left(  d_{1}^{G}\left(  l\right)  \right) \\
d_{1}^{G}\left(  l\right)  =\frac{\Phi^{-1}\left(  p\right)  }{\sqrt{\rho}%
}-\frac{\sqrt{1-\rho}}{\sqrt{\rho}}\Phi^{-1}\left(  \frac{l}{\left(
1-\overset{\_}{R}\right)  N}\right)
\end{gather}

Vasicek \cite{Vasicek-LHP} and Schonbucher and Shubert
\cite{Schonbucher-Shubert-2001} show that LHP approximation is quite accurate
for upper tail of the loss distribution even for mid-sized portfolios of about
100 names. We will use a symmetric one factor LHP approximation in this paper
for analytical tractability.

\subsection{\label{section_correlation_spectrum}From Loss Distribution to
Correlation Surface}

It is intuitively clear that the choice of the dependence structure affects
the degree of uncertainty about the portfolio loss. Indeed, if all issuers in
the portfolio are completely independent of each other and any common driving
factor, then the law of large numbers assures that the portfolio loss under
the LHP approximation is a well determined number with little uncertainty
about it. On the other extreme, if the issuers are highly dependent such that
they all default or survive at the same time, then the portfolio loss has a
binary outcome -- it is either zero or equal to the maximum loss.

It is easy to quantify this intuitive result. While the mean of the loss
distribution is not affected by the choice of copula, one can show that the
second and higher moments of the loss distribution depend on the copula
characteristics. In particular, the variance of the loss can be expressed in
terms of bivariate default correlation coefficient $\rho^{d}\left(  p\right)
$ defined in section \ref{section_copula}. Under assumption of equal default
probabilities for all obligors, it is given by:%
\begin{equation}
Var(L)=(1-\overset{\_}{R})^{2}N^{2}p(1-p)\rho^{d}\left(  p\right)
\label{VarL}%
\end{equation}
Thus, in line with the intuition, the uncertainty of the default loss
distribution is directly proportional to the default correlation coefficient.

To characterize the shape of the entire distribution of $L$, we must look at
the particular slices of portfolio loss. Let $\left(  K_{d},K_{u}\right]  $
denote a tranche with attachment point $K_{d}$ and detachment point $K_{u}$
expressed as fractions of the reference portfolio notional so that $0\leq
K_{d}<K_{u}\leq1$. The notional of the tranche is defined as $N_{\left(
K_{d},K_{u}\right]  }=N\left(  K_{u}-K_{d}\right)  $ where $N$ is the notional
of the portfolio. The loss $L_{\left(  K_{d},K_{u}\right]  }$ of the tranche
is the fraction of $L$ that falls between $K_{d}$ and $K_{u}.$ For simplicity,
assume that total notional $N$ is normalized to 1.%

\begin{gather}
L_{\left(  K_{d},K_{u}\right]  }=f_{\left(  K_{d},K_{u}\right]  }\left(
L\right) \\
f_{\left(  K_{d},K_{u}\right]  }\left(  x\right)  \equiv\left(  x-K_{d}%
\right)  ^{_{+}}-\left(  x-K_{u}\right)  ^{_{+}} \label{fKK}%
\end{gather}

Tranches with zero attachment point, $\left(  0,K_{u}\right]  ,$ and unit
detachment point, $\left(  K_{d},1\right]  ,$ are called equity and senior
tranches, respectively. Loss of any tranche can be decomposed into losses of
either two equity or two senior tranches tranches $L_{\left(  K_{d}%
,K_{u}\right]  }=L_{\left(  0,K_{u}\right]  }-L_{\left(  0,K_{d}\right]
}=L_{\left(  K_{d},1\right]  }-L_{\left(  K_{u},1\right]  }$ This is similar
to representing a spread option as a long/short position using either calls or puts.

The expected loss of the equity tranche $L_{\left(  0,K\right]  }$ depends on
the portfolio loss distribution and under the LHP\ approximation can be
computed using only the distribution of the market factor:%

\begin{align}
EL_{\left(  0,K\right]  }  &  =Ef_{\left(  0,K\right]  }\left(  L\right)
\label{ELK}\\
&  =\left(  1-\overset{\_}{R}\right)  E\left[  G\left(  \frac{d-bR_{m}}%
{\sqrt{1-b^{2}}}\right)  1_{\left\{  R_{m}\geq d_{1}\left(  K\right)
\right\}  }\right]  +KP\left(  R_{m}<d_{1}\left(  K\right)  \right) \nonumber
\end{align}
The expectation in (\ref{ELK}) can be computed by Monte Carlo simulation or
numerical integration if we know the cdf of residuals $G$ and the distribution
of $R_{m}$ (see appendix \ref{appendix_LHP_MonteCarlo}). For the Gaussian
copula, the integral can be calculated in a closed form:%

\begin{align}
E^{G}L_{\left(  0,K\right]  }  &  =\left(  1-\overset{\_}{R}\right)
\Phi\left(  \Phi^{-1}\left(  p\right)  ,-d_{1};-\sqrt{\rho}\right)
+K\Phi\left(  d_{1}\right) \label{ELKG}\\
d_{1}  &  =\frac{1}{\sqrt{\rho}}\Phi^{-1}\left(  p\right)  -\frac{\sqrt
{1-\rho}}{\sqrt{\rho}}\Phi^{-1}\left(  \frac{K}{1-\overset{\_}{R}}\right)
\end{align}

Because of its analytical tractability, it is convenient to use the Gaussian
copula as a benchmark model when comparing different choices of dependence
structure. By finding the asset correlation level $\rho$ that replicates the
results of more complex portfolio loss generating models in the context of a
Gaussian copula framework, we can translate the salient features of such
models into mutually comparable units. More specifically, we define the
correlation surface as follows:

\begin{definition}
Suppose the loss distribution of a large homogeneous portfolio is generated by
a model $\left\{  C,p,\overset{\_}{R}\right\}  $ with copula $C,$ identical
individual default probabilities $p$ and recovery rates $\overset{\_}{R}$. Let
$L_{(0,K]}\in\left[  pf_{\left(  0,K\right]  }\left(  1-\overset{\_}%
{R}\right)  ,\text{ }f_{\left(  0,K\right]  }\left(  \left(  1-\overset{\_}%
{R}\right)  p\right)  \right]  $ be the expected loss of the equity tranche
$\left(  0,K\right]  .$ We define the \textbf{correlation surface}
$\rho(K,p,\overset{\_}{R})$ of the model $\left\{  C,p,\overset{\_}%
{R}\right\}  $ as the correlation parameter of the Gaussian copula that
produces the same expected loss $EL_{(0,K]}$ for the tranche $\left(
0,K\right]  $ for the given horizon $T$ and given single-issuer cumulative
default probability $p$:%
\begin{gather}
\rho(K,p,\overset{\_}{R})\text{ solves }E^{G}L_{(0,K]}\left(  \rho\right)
=EL_{(0,K]}\text{ for all }K\in\left[  0,1\right] \label{corr_imp}\\
\text{where }EL_{(0,K]}\text{ is expected loss of the tranche}\nonumber\\
(0,K]\text{ generated by model }\left\{  C,p,\overset{\_}{R}\right\} \nonumber
\end{gather}
where $E^{G}L_{(0,K]}$ is defined in (\ref{ELKG}).
\end{definition}

The correlation surface as defined above is closely related but not identical
to the notion of the base correlation used by many practitioners
\cite{JPMorganGuide}. The difference is that the base correlation is defined
using the prices of the equity tranches, which in turn depend on interest
rates, term structure of losses, etc. By contrast, the correlation surface is
defined without a reference to any market price. It characterizes the
portfolio loss generating model, rather than the supply/demand forces in the market.

One could, of course, take another logical step, and instead of deriving the
correlation surface from the parameters of the dynamic loss generating model,
go in the opposite direction -- find such parameters of the loss generating
model that result in the closest match to the market prices of CDO tranches.
It would be natural to call this solution "implied parameters", and the
corresponding function $\rho(K,p,\overset{\_}{R})$ "implied correlation
surface". The latter would, in fact, coincide with the conventionally defined
base correlation.

To ensure that the correlation surface is well defined we need to prove that
(\ref{corr_imp}) has a unique solution. Let us first prove the following:

\begin{proposition}
For the Gaussian copula, the expected loss of an equity tranche,
$E^{G}L_{\left(  0,K\right]  },$ is a monotonically decreasing function of
$\rho$ and attains its maximum when $\rho$ is equal to 0 and minimum when
$\rho$ is 1.

\begin{proof}
Using Notation 2, Eq. (\ref{ELKG}) and the properties of Gaussian
distribution\footnote{The following properties of two dimensional Gaussian cdf
are used in the calculation%
\begin{gather*}
\Phi_{2}\left(  x,y;\rho\right)  \equiv\frac{\partial}{\partial y}\Phi\left(
x,y;\rho\right)  =\phi\left(  y\right)  \Phi\left(  \frac{x-\rho y}%
{\sqrt{1-\rho^{2}}}\right) \\
\Phi_{3}\left(  x,y;\rho\right)  \equiv\frac{\partial}{\partial\rho}%
\Phi\left(  x,y;\rho\right)  =\phi\left(  x,y;\rho\right)
\end{gather*}
\par
First formula is derived by taking the derivative and re-arranging the terms.
The proof of the second can be found in
Vasicek(\cite{Vasicek-BiNormalExpansion})
\par
{}} we derive
\begin{align}
E_{\rho}^{G}L_{\left(  0,K\right]  }  &  \equiv\frac{\partial}{\partial\rho
}E^{G}L_{\left(  0,K\right]  }\\
&  =-\left(  1-\overset{\_}{R}\right)  \left[  \frac{1}{2\sqrt{\rho}}\Phi
_{3}\left(  \Phi^{-1}\left(  p\right)  ,-d_{1};-\sqrt{\rho}\right)  +\Phi
_{2}\left(  \Phi^{-1}\left(  p\right)  ,-d_{1};-\sqrt{\rho}\right)
\frac{\partial}{\partial b}d_{1}\right] \nonumber\\
&  +K\phi\left(  d_{1}\right)  \frac{\partial}{\partial b}d_{1}\nonumber\\
&  =-\frac{1-\overset{\_}{R}}{2\sqrt{\rho}}\phi\left(  \Phi^{-1}\left(
p\right)  ,-d_{1};-\sqrt{\rho}\right)  <0\text{ }%
\end{align}
\textit{for any }$\rho\in\left(  0,1\right)  $\textit{.}
\end{proof}
\end{proposition}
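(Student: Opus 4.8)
The plan is to establish strict monotonicity on the open interval $(0,1)$ by differentiating the closed form (\ref{ELKG}) directly, and then read off the endpoint behaviour by continuity. Write $E^{G}L_{(0,K]}=(1-\overset{\_}{R})\,\Phi\big(\Phi^{-1}(p),-d_{1};-\sqrt{\rho}\big)+K\,\Phi(d_{1})$ and note that $\rho$ enters in two places: explicitly through the correlation argument $-\sqrt{\rho}$, and implicitly through $d_{1}$, which depends on $\rho$ via $b=\sqrt{\rho}$. Differentiating with respect to $\rho$ and invoking the two identities recorded in the footnote — $\Phi_{2}(x,y;\rho)=\phi(y)\,\Phi\big(\tfrac{x-\rho y}{\sqrt{1-\rho^{2}}}\big)$ to handle the $-d_{1}$ slot of the first $\Phi$, and Vasicek's identity $\Phi_{3}(x,y;\rho)=\phi(x,y;\rho)$ to handle the correlation slot — produces three contributions: one from the correlation argument, one from $-d_{1}$ inside $\Phi\big(\Phi^{-1}(p),-d_{1};-\sqrt{\rho}\big)$, and one from $d_{1}$ inside $K\,\Phi(d_{1})$.

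The crux is that the last two contributions cancel exactly. First I would evaluate $\Phi_{2}\big(\Phi^{-1}(p),-d_{1};-\sqrt{\rho}\big)=\phi(d_{1})\,\Phi\big(\tfrac{\Phi^{-1}(p)-\sqrt{\rho}\,d_{1}}{\sqrt{1-\rho}}\big)$, then substitute $d_{1}=\tfrac{1}{\sqrt{\rho}}\Phi^{-1}(p)-\tfrac{\sqrt{1-\rho}}{\sqrt{\rho}}\Phi^{-1}\big(\tfrac{K}{1-\overset{\_}{R}}\big)$ and simplify: the $\Phi^{-1}(p)$ terms cancel and the inner argument collapses to exactly $\Phi^{-1}\big(\tfrac{K}{1-\overset{\_}{R}}\big)$, so that $(1-\overset{\_}{R})\,\Phi_{2}\big(\Phi^{-1}(p),-d_{1};-\sqrt{\rho}\big)=K\,\phi(d_{1})$. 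Because the $-d_{1}$ contribution is $-(1-\overset{\_}{R})\,\Phi_{2}\cdot\partial_{\rho}d_{1}$ while the $K\,\Phi(d_{1})$ contribution is $+K\,\phi(d_{1})\cdot\partial_{\rho}d_{1}$, these are equal and opposite and drop out whatever $\partial_{\rho}d_{1}$ happens to be. What survives is only the correlation-argument term, $\partial_{\rho}E^{G}L_{(0,K]}=-\tfrac{1-\overset{\_}{R}}{2\sqrt{\rho}}\,\phi\big(\Phi^{-1}(p),-d_{1};-\sqrt{\rho}\big)$, and this is strictly negative for every $\rho\in(0,1)$ since the bivariate normal density is strictly positive and $1-\overset{\_}{R}>0$.

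Given that $E^{G}L_{(0,K]}$ is strictly decreasing on $(0,1)$ and continuous up to the endpoints, it is immediate that its maximum over $[0,1]$ is attained at $\rho=0$ and its minimum at $\rho=1$. To identify the extremal values I would note that as $\rho\to0$ the Vasicek formula degenerates, so $L^{G}\to(1-\overset{\_}{R})p$ almost surely and hence $E^{G}L_{(0,K]}\to f_{(0,K]}\big((1-\overset{\_}{R})p\big)$, whereas as $\rho\to1$ the common factor drives all names together, so $L^{G}$ becomes Bernoulli with values $1-\overset{\_}{R}$ and $0$ and hence $E^{G}L_{(0,K]}\to p\,f_{(0,K]}(1-\overset{\_}{R})$; these match the upper and lower ends of the admissible interval in the Definition and are consistent with concavity of $f_{(0,K]}$ together with Jensen's inequality (recall $E L^{G}=(1-\overset{\_}{R})p$ for every $\rho$).

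The main obstacle is the algebraic reduction showing that the inner argument of $\Phi$ appearing in $\Phi_{2}\big(\Phi^{-1}(p),-d_{1};-\sqrt{\rho}\big)$ is exactly $\Phi^{-1}\big(\tfrac{K}{1-\overset{\_}{R}}\big)$; once that identity is secured, the two $d_{1}$-derivative terms annihilate each other and the sign of the remaining term is transparent. A secondary point to watch is that the prefactor $1/(2\sqrt{\rho})$ blows up as $\rho\to0^{+}$, which is why strict negativity of the derivative is claimed only on the open interval and the endpoint statements are obtained by continuity rather than by evaluating the derivative at $0$ or $1$; one should also lean on Vasicek's cited identity $\Phi_{3}=\phi(\cdot,\cdot;\rho)$ rather than trying to justify differentiation under the integral from scratch.
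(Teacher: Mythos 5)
Your proposal is correct and follows essentially the same route as the paper: differentiate the closed-form expression (\ref{ELKG}) using the two footnote identities for $\Phi_{2}$ and $\Phi_{3}$, observe that the two $d_{1}$-derivative contributions cancel exactly (your explicit verification that $(1-\overset{\_}{R})\,\Phi_{2}\left(\Phi^{-1}(p),-d_{1};-\sqrt{\rho}\right)=K\phi(d_{1})$ is precisely the mechanism behind the paper's cancellation), and conclude from the strictly negative surviving term $-\frac{1-\overset{\_}{R}}{2\sqrt{\rho}}\phi\left(\Phi^{-1}(p),-d_{1};-\sqrt{\rho}\right)$. Your additional identification of the limiting values at $\rho=0$ and $\rho=1$ goes slightly beyond the paper's proof but is consistent with the admissible interval stated in the Definition.
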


Therefore, there is a one-to-one mapping between loss distribution and
correlation surface, and our transformation does not lead to any loss of
information. The next proposition shows how to calculate the loss cdf using
the correlation surface and its slope along the $K$-dimention.

\begin{proposition}
Suppose $\rho(K,p,\overset{\_}{R})$ is the correlation surface for model
$\left\{  C,p,\overset{\_}{R}\right\}  $ and the probability distribution
function of the portfolio loss is a continuous function then the loss cdf can
be computed from the correlation surface:%
\begin{equation}
P\left(  L\leq K\right)  =P^{G}\left(  L\leq K\right)  +\rho_{K}E_{\rho}%
^{G}L_{\left(  0,K\right]  } \label{loss_from_corr}%
\end{equation}
where $\rho\equiv\rho(K,p,\overset{\_}{R})$ is the level of the correlation
surface, $\rho_{K}\equiv\frac{\partial}{\partial K}\rho(K,p,\overset{\_}{R})$
is the correlation surface slope and%
\begin{gather}
P^{G}\left(  L\leq K\right)  =1-\Phi\left(  d_{1}\right) \\
E_{\rho}^{G}L_{\left(  0,K\right]  }=\left(  1-\overset{\_}{R}\right)
\frac{1}{2\sqrt{\rho}}\phi\left(  \Phi^{-1}\left(  p\right)  ,-d_{1}%
;-\sqrt{\rho}\right) \\
d_{1}=\frac{1}{\sqrt{\rho}}\Phi^{-1}\left(  p\right)  -\frac{\sqrt{1-\rho}%
}{\sqrt{\rho}}\Phi^{-1}\left(  \frac{K}{1-\overset{\_}{R}}\right)
\end{gather}

\begin{proof}
first note that the derivative with respect to $K$ of the expected tranche's
loss under true copula $C$ is related to the cdf of the loss%
\begin{align}
\frac{d}{dK}EL_{\left(  0,K\right]  }  &  =\frac{d}{dK}E\left(  L-\left(
L-K\right)  _{+}\right) \\
&  =-\frac{d}{dK}E\left(  L-K\right)  _{+}=E1_{\left\{  L-K\geq0\right\}
}=1-P\left(  L\leq K\right) \nonumber
\end{align}
therefore%
\begin{align}
P\left(  L\leq K\right)   &  =1-\frac{d}{dK}EL_{\left(  0,K\right]  }\\
&  =1-E_{K}^{G}L_{\left(  0,K\right]  }-\rho_{K}E_{\rho}^{G}L_{\left(
0,K\right]  }\nonumber\\
&  =1-\Phi\left(  d_{1}\right)  +\frac{1-\overset{\_}{R}}{2\sqrt{\rho}}%
\phi\left(  \Phi^{-1}\left(  p\right)  ,-d_{1};-\sqrt{\rho}\right)  \rho
_{K}\nonumber\\
&  =P^{G}\left(  L\leq K\right)  +\left(  1-\overset{\_}{R}\right)  \frac
{1}{2\sqrt{\rho}}\phi\left(  \Phi^{-1}\left(  p\right)  ,-d_{1};-\sqrt{\rho
}\right)  \rho_{K}\nonumber
\end{align}
where partial derivative with respect to $K$ is computed as%
\begin{align}
E_{K}^{G}L_{\left(  0,K\right]  }  &  \equiv\frac{\partial}{\partial K}%
E^{G}L_{\left(  0,K\right]  }\\
&  =-\left(  1-\overset{\_}{R}\right)  \Phi_{2}\left(  \Phi^{-1}\left(
p\right)  ,-d_{1};-\sqrt{\rho}\right)  \frac{\partial}{\partial K}d_{1}%
+K\phi\left(  d_{1}\right)  \frac{\partial}{\partial K}d_{1}+\Phi\left(
d_{1}\right) \nonumber\\
&  =\Phi\left(  d_{1}\right) \nonumber
\end{align}

\end{proof}
\end{proposition}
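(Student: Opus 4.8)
The plan is to exploit two facts in tandem: by construction, the expected equity-tranche loss under the true copula $C$ equals the Gaussian-copula expected loss $E^{G}L_{(0,K]}$ evaluated at $\rho=\rho(K,p,\overset{\_}{R})$, while as a function of the detachment point $K$ this quantity is an antiderivative of the complementary loss cdf. Equating the $K$-derivative obtained from each viewpoint will produce (\ref{loss_from_corr}). First I would record the antiderivative identity: writing $EL_{(0,K]}=E f_{(0,K]}(L)=E\bigl(L-(L-K)_{+}\bigr)$ and differentiating, one gets $\frac{d}{dK}E(L-K)_{+}=-E\,1_{\{L>K\}}=-\bigl(1-P(L\le K)\bigr)$, so that $P(L\le K)=1-\frac{d}{dK}EL_{(0,K]}$; interchanging the derivative and the expectation is justified here precisely by the assumed continuity of the loss density.

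Next I would differentiate the other representation, $EL_{(0,K]}=E^{G}L_{(0,K]}\bigl(\rho(K,p,\overset{\_}{R})\bigr)$, which is the defining relation (\ref{corr_imp}) of the correlation surface. The variable $K$ enters the right-hand side twice: explicitly, through the argument $d_{1}$ in the closed form (\ref{ELKG}), and implicitly, through $\rho(K,p,\overset{\_}{R})$. The chain rule therefore gives $\frac{d}{dK}EL_{(0,K]}=E_{K}^{G}L_{(0,K]}+\rho_{K}\,E_{\rho}^{G}L_{(0,K]}$, where $E_{\rho}^{G}L_{(0,K]}$ is exactly the $\rho$-derivative already evaluated in the preceding proposition and can be quoted verbatim. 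So the only remaining task is to compute the partial derivative $E_{K}^{G}L_{(0,K]}$ of (\ref{ELKG}) at fixed $\rho$.

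The crux --- and essentially the only nontrivial step --- is to show that this $K$-derivative collapses to $E_{K}^{G}L_{(0,K]}=\Phi(d_{1})$. Differentiating (\ref{ELKG}) at fixed $\rho$ yields three terms: $-(1-\overset{\_}{R})\,\Phi_{2}\bigl(\Phi^{-1}(p),-d_{1};-\sqrt{\rho}\bigr)\,\partial_{K}d_{1}$ from the bivariate cdf, together with $K\phi(d_{1})\,\partial_{K}d_{1}$ and $\Phi(d_{1})$ from the prefactor $K\Phi(d_{1})$. I would then apply the footnote identity $\Phi_{2}(x,y;r)=\phi(y)\,\Phi\bigl(\tfrac{x-ry}{\sqrt{1-r^{2}}}\bigr)$ with $x=\Phi^{-1}(p)$, $y=-d_{1}$, $r=-\sqrt{\rho}$: the inner argument equals $\tfrac{\Phi^{-1}(p)-\sqrt{\rho}\,d_{1}}{\sqrt{1-\rho}}$, and solving the definition of $d_{1}$ for $\sqrt{\rho}\,d_{1}$ shows this equals $\Phi^{-1}\!\bigl(\tfrac{K}{1-\overset{\_}{R}}\bigr)$, whence $(1-\overset{\_}{R})\,\Phi_{2}(\cdots)=K\phi(d_{1})$. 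The first two terms then cancel exactly, leaving $E_{K}^{G}L_{(0,K]}=\Phi(d_{1})$.

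Assembling the pieces, $P(L\le K)=1-E_{K}^{G}L_{(0,K]}-\rho_{K}E_{\rho}^{G}L_{(0,K]}=1-\Phi(d_{1})-\rho_{K}E_{\rho}^{G}L_{(0,K]}$, and since $P^{G}(L\le K)=1-\Phi(d_{1})$ is the Vasicek LHP loss cdf (using $b^{2}=\rho$ via (\ref{asset_corr})), this is precisely (\ref{loss_from_corr}). The only point demanding care is the sign bookkeeping for $E_{\rho}^{G}L_{(0,K]}$, which must be carried consistently with the strictly negative value found in the previous proposition so that, after substitution, the displayed formula comes out with the stated sign.
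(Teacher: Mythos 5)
Your proof is correct and follows essentially the same route as the paper's: the antiderivative identity $P(L\le K)=1-\tfrac{d}{dK}EL_{(0,K]}$, the chain rule through the defining relation of the correlation surface, and the collapse $E_{K}^{G}L_{(0,K]}=\Phi(d_{1})$ via the $\Phi_{2}$ identity and cancellation. You also correctly flag the sign bookkeeping for $E_{\rho}^{G}L_{(0,K]}$, where the proposition's displayed formula uses the opposite sign convention from the preceding monotonicity proposition.
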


Another important point is that the correlation surface depends implicitly on
the term to maturity via the cumulative default probability $p$. However, this
is not the only dependence -- potentially, the dependence structure
characterized by the copula $C$ also exhibits some time dependence when viewed
within the context of the Gaussian copula. This statement needs a
clarification -- the copula $C$ itself is defined in a manner that encompasses
all time horizons and therefore cannot depend on any particular horizon.
However, when we translate the tranche loss generated with this dependence
structure into the simpler Gaussian model the transformation that is required
may depend on the horizon $T$. In other words, the shape of the correlation
surface $\rho(K,p,\overset{\_}{R})$ may depend on the horizon. We study this
dependence in detail in section \ref{section_model_comparison}.

These results allow us to further develop the often mentioned analogy between
the role that default correlation plays in tranche pricing on one hand and the
role that implied volatility plays in equity derivatives pricing, on the
other. The starting point of this analogy is the result (\ref{VarL}), which
shows that default correlation is directly related to the uncertainty of
portfolio loss. If we recall that the CDO equity tranche is essentially a call
option on portfolio survival, it becomes clear that the price of the equity
tranche should be positively related to the default (asset) correlation, just
as the price of any equity option is positively related to the implied
volatility of its underlying stock.

The correlation surface takes this analogy one step further. Just as the
implied volatility surface is sufficient to derive the risk-neutral
distribution of the underlying stock (see \cite{Derman-Kani-1994},
\cite{Dupire-1994}, \cite{Rubinstein-1994}, \cite{Jackwerth-Rubinstein-1996})
and price any European option, we have shown that the correlation surface is
sufficient to derive the full portfolio loss distribution and price any CDO tranche.

\section{\label{section_equity_return}Time Series Approach to Tail Risk}

\subsection{Motivation for the time series approach}

The analogy between the correlation surface and implied volatility surface
introduced in the previous section leads to further insight about the origins
of the large credit portfolio loss risks. Both equity and credit derivatives
pricing exhibits substantial deviations from the simplest Gaussian models of
the underlying assets. In particular, the equity index options implied
volatility exhibits a steep downward skew of implied volatility, and the CDO
tranches exhibit a steep upward skew of implied (base) correlations. To
further underline the similarities between these skews, note that a senior CDO
tranche with an attachment point that is higher than the expected loss on the
underlying portfolio can be thought as an out-of-money put option on portfolio
losses. When looked from this angle, the equity implied volatility skew and
credit correlation skew are tilted the same way -- towards the farther
out-of-money options.

Recall that the empirical distribution of returns does indeed exhibit
significant downside tails, and that a large part of the implied volatility
skew can be explained by the properties of the empirical distribution
\cite{Day-Lewis-1992}. Given the above mentioned analogies between the
synthetic CDO tranches and equity index options, it is quite natural to look
for a similar explanation of the implied correlation skew.

The standard Gaussian copula framework implicitly relies on the Merton-style
structural model for definition of default correlations. Therefore, if we are
to give \ an empirical explanation to the observed base correlation skew we
must start by giving an empirical meaning to the variables in this model. Our
working hypothesis in this paper will be that the meaning of the "market
factor" in the factor copula framework is the same as the market factor used
in the equity return modeling. As such, it is often possible to use an
observable broad market index such as S\&P 500 as a proxy for the economic
market factor, with an added convenience that there exists a long historical
dataset for its returns and a rich set of equity options data from which one
can glean independent information about their implied return distribution.

This hypothesis is not uncommon in portfolio credit risk modeling -- for
example, the authors of \cite{Mashal-Naldi-Zeevi-2003}\ emphasized the
importance of using a fat-tailed distribution of asset returns in the copula
framework in part by citing the empirical evidence from equity markets.
However, most researchers have focused on the single-period return
distribution characteristics.

In contrast, we focus in this paper on the long-run cumulative returns, and
prove that their distribution is quite distinct from that of the short-term
(single-period) returns. As we will show in the rest of this paper, it is the
time aggregation properties and the compounding of the asymmetric volatility
responses that make it possible to explain the credit correlation skew for 5-
or even 10-year horizons. Moreover, this dynamic explanation of the skew
allows one to make rather specific predictions for the dependence of this skew
on both the term to maturity and on the hazard rates and other model parameters.

We first specify time series properties of stock returns for high frequency
time intervals (daily or weekly) and then derive the distribution of stock
prices over longer horizons measured in months or even years. Assuming
geometric Brownian motion of stock prices on short intervals leads to the same
log-normal shape for the distribution of stock prices for all future horizons.
Models with more realistic dynamics can lead to richer distributions of the
time aggregated returns even if the high frequency shocks are Gaussian.

In the GARCH family of models \cite{EngleArch}, \cite{BollerslevGARCH},
\cite{Engle-Bollerslev-Nelson}, there have been many investigations of the
difference between the conditional and the unconditional distributions. These
models reveal an important explanation for the excess kurtosis in financial
returns but generally show no reason to expect skewness in returns. Diebold
\cite{Diebold} investigates the implications of time aggregation of GARCH
models concluding that eventually they lose their excess kurtosis as the
central limit theorem leaves the average distribution Gaussian.

Asymmetric volatility models were introduced by Nelson \cite{Nelson} and
further investigated by Glosten, Jaganathan and Runkle \cite{GJR} and by
Zakoian \cite{Zakoian}. The model to be discussed in this paper is essentially
the GJR model and will be called the Threshold-ARCH or TARCH. These models all
show that negative returns forecast higher volatility than positive returns of
the same magnitude. This observation is very widespread and is sometimes
called the leverage effect following Black \cite{Black}. However, it is most
likely due to risk aversion as in Campbell and Hentschell \cite{Campb} and in
this paper we will refer to it simply as asymmetric volatility.

The first indication that skewness could arise from time aggregation was
presented in Engle \cite{EngleSkew}. He showed that with asymmetric
volatility, the skewness of time aggregated returns could be more negative
than the skewness of the individual innovations. In the next section we show
analytically how this negative skewness depends upon time aggregation. We then
examine the multivariate distribution when the single common factor has TARCH dynamics.

\subsection{Univariate model: TARCH(1,1)}

Let $r_{t}$ be the log-return of a particular stock or an index such as SP500
from time $t-1$ to time $t$ . $\digamma_{t}$ denotes the information set
containing realized values of all the relevant variables up to time $t$. We
will use the expectation sign with subscript $t$ to denote the expectation
conditional on time $t$ information set: $E_{t}\left(  .\right)  =E\left(
.|\digamma_{t}\right)  .$ The time step that we use in the empirical part is 1
day or 1 week. Predictability of stock returns is negligible over such time
horizons and therefore we assume that the conditional mean is constant and
equal to zero:%

\begin{equation}
m_{t}\equiv E_{t-1}(r_{t})=0
\end{equation}
The conditional volatility $\sigma_{t}^{2}\equiv E_{t-1}(r_{t}^{2})$ of
$r_{t}$ in TARCH(1,1) has the autoregressive functional form similar to the
standard GARCH(1,1) but with an additional asymmetric term \cite{GJR},
\cite{Zakoian}:%
\begin{align}
r_{t}  &  =\sigma_{t}\varepsilon_{t}\label{tarch11}\\
\sigma_{t}^{2}  &  =\omega+\alpha r_{t-1}^{2}+\alpha_{d}r_{t-1}^{2}1_{\left\{
r_{t-1}\leq0\right\}  }+\beta\sigma_{t-1}^{2}\nonumber
\end{align}

where $\left\{  \varepsilon_{t}\right\}  $ are iid , have zero mean, variance
normalized to 1, finite skewness $s_{\varepsilon}$ and finite kurtosis
$k_{\varepsilon}$. We also assume that $\omega>0$ and $\alpha,\alpha_{d}%
,\beta$ are non-negative so that the conditional variance $\sigma_{t}^{2}$ is
guaranteed to be positive.

The persistence of volatility in the model is governed by the parameter
$\zeta:$%

\begin{equation}
\zeta\equiv E\left(  \beta+\alpha\varepsilon_{t}^{2}+\alpha_{d}\varepsilon
_{t}^{2}1_{\left\{  \varepsilon_{t}\leq0\right\}  }\right)  =\beta
+\alpha+\alpha_{d}v_{\varepsilon}^{d} \label{tarch11_persistence}%
\end{equation}

where $v_{\varepsilon}^{d}\equiv E\left(  \varepsilon_{t}^{2}1_{\left\{
\varepsilon_{t}\leq0\right\}  }\right)  $ is the "right truncated variance" of
$\varepsilon_{t}$ $.$ If $\zeta\in\lbrack0,1)$ then conditional variance
mean-reverts to its unconditional level $\sigma^{2}=E\left(  \sigma_{t}%
^{2}\right)  =\frac{\omega}{1-\zeta}$. The following parameter $\xi$ will also
be useful in describing the higher moments of TARCH(1,1) returns and volatilities:%

\begin{equation}
\xi\equiv E\left(  \beta+\alpha\varepsilon_{t}^{2}+\alpha_{d}\varepsilon
_{t}^{2}1_{\left\{  \varepsilon_{t}\leq0\right\}  }\right)  ^{2}=\beta
^{2}+\alpha^{2}k_{\varepsilon}+\alpha_{d}^{2}k_{\varepsilon}^{d}+2\alpha
\beta+2\alpha_{d}\beta v_{\varepsilon}^{d}+2\alpha\alpha_{d}k_{\varepsilon
}^{d} \label{tarch11_xi}%
\end{equation}

where $k_{\varepsilon}^{d}\equiv E\left(  \varepsilon_{t}^{4}1_{\left\{
\varepsilon_{t}\leq0\right\}  }\right)  $ is the "right truncated kurtosis" of
$\varepsilon_{t}.$ We can rewrite (\ref{tarch11}) in terms of the increments
of the conditional volatility $\Delta\sigma_{t+1}^{2}\equiv\sigma_{t+1}%
^{2}-\sigma_{t}^{2}$ and the volatility shocks $\eta_{t}$%

\begin{align}
r_{t}  &  =\sigma_{t}\varepsilon_{t}\label{tarch11_v}\\
\Delta\sigma_{t+1}^{2}  &  =\left(  1-\zeta\right)  \left(  \sigma^{2}%
-\sigma_{t}^{2}\right)  +\sigma_{t}^{2}\eta_{t}\nonumber\\
\eta_{t}  &  \equiv\alpha\left(  \varepsilon_{t}^{2}-1\right)  +\alpha
_{d}\left(  \varepsilon_{t}^{2}1_{\left\{  \varepsilon_{t}\leq0\right\}
}-v_{\varepsilon}^{d}\right) \nonumber
\end{align}

The speed of mean reversion in volatility is $1-\zeta$ and is small when
$\zeta$ is close to one which is usually true for daily and weekly equity
returns -- hence the persistence of the volatility. The TARCH(1,1) volatility
shocks $\eta_{t}$ are iid, with zero mean and constant variance $var(\eta
_{t})=var(\alpha\varepsilon_{t}^{2}+\alpha_{d}\varepsilon_{t}^{2}1_{\left\{
\varepsilon_{t}\leq0\right\}  })=\xi-\zeta^{2}$. The correlation of
conditional volatility with the return in the previous period depends on the
covariance of return and volatility innovations:
\begin{equation}
corr_{t-1}\left(  r_{t},\sigma_{t+1}^{2}\right)  =corr_{t-1}\left(
\varepsilon_{t},\eta_{t}\right)  =\frac{\alpha s_{\varepsilon}+\alpha
_{d}s_{\varepsilon}^{d}}{\sqrt{\xi-\zeta^{2}}} \label{eqTarchLeverage}%
\end{equation}

where $s_{\varepsilon}^{d}=E\left(  \varepsilon_{t}^{3}1_{\left\{
\varepsilon_{t}\leq0\right\}  }\right)  <0$ is the "right truncated" skewness
of $\varepsilon_{t}.$The negative correlation of return and volatility shocks,
often cited as the "leverage effect"\footnote{Though we note here that the
magnitudfe of this "leverage effect" in return time series for stocks of most
investment grade issuers far exceeds the amount that would be reasonable based
purely on their capital structure leverage.}, is the main source of the
asymmetry in the return distribution. We can see from formula
(\ref{eqTarchLeverage}) that negative return-volatility correlation can be
achieved either through negative skewness of return innovations
$s_{\varepsilon}<0$, through asymmetry in volatility process $\alpha_{d}>0$ or
combination of the two$.$ We call these static and dynamic asymmetry, respectively.

In this paper we are interested in the effects of the volatility dynamics on
the distribution of long horizon returns which in the log representation is a
sum of short term log returns $R_{t,t+T}\equiv\ln S_{t+T}-\ln S_{t}=%
{\displaystyle\sum\limits_{u=t+1}^{t+T}}
r_{u}$. While a closed form solution for the probability density function of
TARCH(1,1) aggregated returns is not available, we can still derive some
analytical results for its conditional and unconditional moments: volatility,
skewness and kurtosis. Since TARCH is linear autoregressive volatility
process, the conditional variance $V_{t,t+T}$ of the log return $R_{t,t+T}$ is
linear in $\sigma_{t+1}^{2}$ :%

\begin{equation}
V_{t,t+T}=E_{t}R_{t,t+T}^{2}=E_{t}\left(
{\displaystyle\sum\limits_{t+1\leq u\leq t+T}}
\sigma_{u}^{2}\right)  =T\left(  \sigma^{2}+\left(  \sigma_{t+1}^{2}%
-\sigma^{2}\right)  \frac{1}{T}\frac{1-\zeta^{T}}{1-\zeta}\right)
\end{equation}

The new result of this paper for the TARCH(1,1) model is the representation of
the skewness term structure derived in the following proposition.

\begin{proposition}
Suppose $0\leq\zeta<1$ and the return innovations have finite skewness,
$s_{\varepsilon},$ and finite "truncated" third moment, $s_{\varepsilon}^{d}$.
Then the conditional third moment of $R_{t,t+T}$ has the following
representation for TARCH(1,1)
\begin{equation}
E_{t}R_{t,t+T}^{3}=s_{\varepsilon}%
{\displaystyle\sum\limits_{u=1}^{T}}
E_{t}\left(  \sigma_{t+u}^{3}\right)  +3\left(  \alpha s_{\varepsilon}%
+\alpha_{d}s_{\varepsilon}^{d}\right)
{\displaystyle\sum\limits_{u=1}^{T}}
\frac{1-\zeta^{T-u}}{1-\zeta}E_{t}\left(  \sigma_{t+u}^{3}\right)
\label{cond third moment}%
\end{equation}
In addition, if $E\sigma_{t}^{3}$ is finite, then unconditional skewness of
$R_{t,t+T}$ is given by
\begin{equation}
S_{T}\equiv\frac{ER_{t,t+T}^{3}}{E(R_{t,t+T}^{2})^{3/2}}=\left[  \frac
{1}{T^{1/2}}s_{\varepsilon}+3\frac{1}{T^{3/2}}\left(  \alpha s_{\varepsilon
}+\alpha_{d}s_{\varepsilon}^{d}\right)  \frac{T(1-\zeta)-1+\zeta^{T}}%
{(1-\zeta)^{2}}\right]  E\left(  \frac{\sigma_{t}}{\sigma}\right)  ^{3}
\label{skewness}%
\end{equation}

\begin{proof}
See appendix \ref{appendix_aggregate_TARCH} for the details.
\end{proof}
\end{proposition}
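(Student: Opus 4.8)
The plan is to expand the cube and evaluate each monomial by iterated conditioning, exploiting that $r_u=\sigma_u\varepsilon_u$ with $\sigma_u$ measurable with respect to the time-$(u-1)$ information and $\varepsilon_u$ independent of it, with $E(\varepsilon_u)=0$, $E(\varepsilon_u^2)=1$, $E(\varepsilon_u^3)=s_\varepsilon$, $E(\varepsilon_u^3 1_{\{\varepsilon_u\le 0\}})=s_\varepsilon^d$. Write
\[
R_{t,t+T}^{3}=\Big(\sum_{u=t+1}^{t+T}r_{u}\Big)^{3}=\sum_{u}r_{u}^{3}+3\sum_{u\neq v}r_{u}^{2}r_{v}+6\sum_{u<v<w}r_{u}r_{v}r_{w}.
\]
For a triple product $r_ur_vr_w$ with $w$ the largest index, $E_{w-1}(r_ur_vr_w)=r_ur_v\sigma_w E(\varepsilon_w)=0$, so those terms drop out. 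For a mixed term $r_u^2r_v$ with $u<v$, conditioning at time $v-1$ gives $r_u^2\sigma_v E(\varepsilon_v)=0$, so only pairs with $u>v$ survive. For the pure cubes, $E_t r_u^3=E_t(\sigma_u^3)E(\varepsilon_u^3)=s_\varepsilon E_t(\sigma_u^3)$, and summing over $u=t+1,\dots,t+T$ already produces the first sum in the claimed conditional third moment.

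The substance is the backward-looking term $E_t(r_u^2 r_v)$ with $v<u$. Conditioning at time $u-1$ collapses it to $E_t(\sigma_u^2 r_v)$. The auxiliary fact needed is the geometric mean reversion of the conditional variance: taking the time-$(s-1)$ expectation of $\sigma_{s+1}^2=\omega+\sigma_s^2(\beta+\alpha\varepsilon_s^2+\alpha_d\varepsilon_s^2 1_{\{\varepsilon_s\le 0\}})$ yields $E_{s-1}(\sigma_{s+1}^2-\sigma^2)=\zeta(\sigma_s^2-\sigma^2)$, and iterating gives $E_v(\sigma_u^2)=\sigma^2+\zeta^{\,u-v-1}(\sigma_{v+1}^2-\sigma^2)$ for $u>v$. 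Since $r_v$ is known at time $v$, $E_v(\sigma_u^2 r_v)=r_v[\sigma^2+\zeta^{\,u-v-1}(\sigma_{v+1}^2-\sigma^2)]$; substituting $\sigma_{v+1}^2=\omega+\alpha r_v^2+\alpha_d r_v^2 1_{\{r_v\le 0\}}+\beta\sigma_v^2$ and taking the time-$(v-1)$ expectation annihilates every term carrying an odd single power of $r_v$ (because $E_{v-1}(r_v)=0$ and $\sigma_v^2$ is already known), leaving $\zeta^{\,u-v-1}[\alpha E_{v-1}(r_v^3)+\alpha_d E_{v-1}(r_v^3 1_{\{r_v\le 0\}})]=\zeta^{\,u-v-1}(\alpha s_\varepsilon+\alpha_d s_\varepsilon^d)\sigma_v^3$, where one uses $\sigma_v>0$ so that $1_{\{r_v\le 0\}}=1_{\{\varepsilon_v\le 0\}}$. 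Hence $E_t(r_u^2 r_v)=\zeta^{\,u-v-1}(\alpha s_\varepsilon+\alpha_d s_\varepsilon^d)E_t(\sigma_v^3)$. Summing $3\sum_{u>v}$ over $u,v\in\{t+1,\dots,t+T\}$: for fixed $v$, $\sum_{u=v+1}^{t+T}\zeta^{\,u-v-1}=(1-\zeta^{\,t+T-v})/(1-\zeta)$, and reindexing $v=t+u$ reproduces the second sum in the claimed identity.

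For the unconditional skewness, take the unconditional expectation of the conditional third moment; under stationarity $E(\sigma_{t+u}^3)=E\sigma_t^3$ for every $u$, so $ER_{t,t+T}^3=E\sigma_t^3\big[s_\varepsilon T+3(\alpha s_\varepsilon+\alpha_d s_\varepsilon^d)\sum_{u=1}^{T}\frac{1-\zeta^{\,T-u}}{1-\zeta}\big]$, and the geometric summation gives $\sum_{u=1}^{T}\frac{1-\zeta^{\,T-u}}{1-\zeta}=\frac{T(1-\zeta)-1+\zeta^{T}}{(1-\zeta)^2}$. Taking the unconditional expectation of the formula for $V_{t,t+T}$ displayed just above the proposition, with $E\sigma_{t+1}^2=\sigma^2$, gives $ER_{t,t+T}^2=T\sigma^2$, so $(ER_{t,t+T}^2)^{3/2}=T^{3/2}\sigma^3$; dividing and collecting the residual factor as $E(\sigma_t/\sigma)^3$ produces the stated skewness formula.

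The main obstacle is the asymmetric mixed term: recognizing that among all the cross products exactly those with the squared return preceding the linear return contribute, and channeling the leverage correlation between $r_v$ and the future variance level through the geometric mean reversion of $\sigma^2$ to extract the weights $\zeta^{\,u-v-1}$. Everything else — the vanishing of triple products and of forward-looking mixed terms, the two geometric summations, and the unconditional reduction via stationarity — is routine bookkeeping; the only additional care required is to verify finiteness of $E_t(\sigma_{t+u}^3)$ (respectively $E\sigma_t^3$) so that the repeated applications of the tower property are legitimate.
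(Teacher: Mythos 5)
Your proof is correct and follows essentially the same route as the paper's Appendix A argument: expand the cube, kill the triple products and the forward-looking mixed terms via $E_{s-1}(\varepsilon_s)=0$, reduce the surviving terms $E_t(r_v r_u^2)$ with $v<u$ to $\zeta^{\,u-v-1}(\alpha s_\varepsilon+\alpha_d s_\varepsilon^d)E_t(\sigma_v^3)$ using the AR(1) mean reversion of the conditional variance, and then sum the geometric weights. The paper packages the $\zeta$-decay as a covariance recursion in its Lemma while you iterate $E_{s-1}(\sigma_{s+1}^2-\sigma^2)=\zeta(\sigma_s^2-\sigma^2)$ directly, but the content is identical, including the unconditional reduction via stationarity.
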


The conditional third moment is a function of the conditional term structure
of $\sigma_{t}^{3}$, term horizon $T$ and volatility parameters. The
conditional skewness can be computed using second and third conditional
moments derived above. The asymmetry in the return distribution arises from
two sources - skewness of return innovations and asymmetry of the volatility
process. Note that the second term in the formulas for conditional and
unconditional skewness is directly related to the correlation of return and
volatility innovations. If return-volatility correlation is zero $\left(
\alpha s_{\varepsilon}+\alpha_{d}s_{\varepsilon}^{d}=0\right)  $ then
$S_{T}=\frac{1}{T^{1/2}}s_{\varepsilon}E\left(  \frac{\sigma_{t}}{\sigma
}\right)  ^{3}$. If return innovations are symmetric then asymmetric
volatility drives the asymmetry in the return distribution. In figure
\ref{CondSkew} we show conditional and unconditional skewness term structures.
For realistic parameters corresponding approximately to parameters of the
TARCH(1,1) estimated for weekly SP500 log returns, both conditional and
unconditional skewness is negative. It decreases in the medium term, attains
the minimum at approximately the 2 year point and then decays to zero as T
increases. The skewness term structure conditional on the high/low current
volatility is above/below the unconditional skewness.%

\begin{figure}
[t]
\begin{center}
\includegraphics
[natheight=170.000000pt,natwidth=253.937500pt,height=172.9375pt,width=257.3125pt]%
{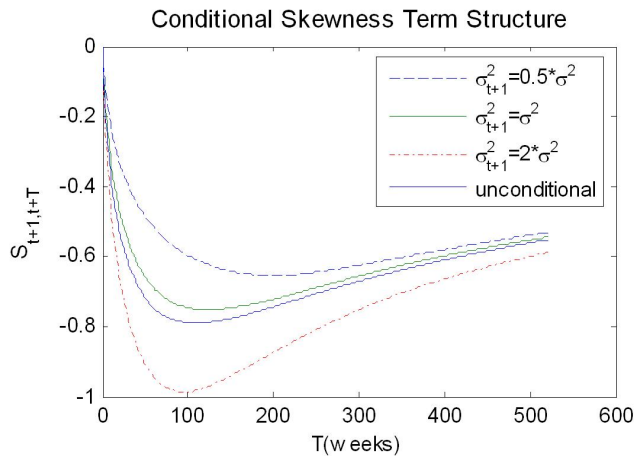}%
\caption{{\small Term structure of conditional skewness of time aggregated
return }$R_{t+1,t+T}.${\small TARCH(1,1) has persistence coefficient }%
$\zeta=0.98$ {\small and the following parametrization: }$\sigma^{2}=1,$
$\alpha=0.01,$ $\alpha_{d}=0.10,$ $\beta=0.92,$ $\varepsilon_{t}\sim N(0,1).$
{\small We plotted unconditional skewness term structure and conditional for
three different initial volatilities: }$\sigma^{2}/2${\small , }$\sigma^{2}%
${\small and }$2\sigma^{2}.$ {\small The term structure of }$E_{t}\sigma
_{t+u}^{3}${\small was computed from 10,000 independent simulations.}}%
\label{CondSkew}%
\end{center}
\end{figure}

To provide some empirical context to the theoretical discussion above, let us
consider the time series of SP500 returns. The results of the estimation of
various TARCH(1,1) specification are shown in the appendix
\ref{appendix_SP500}. Figure \ref{skewSP500} shows the estimate of skewness
for overlapping returns of different aggregation horizons measured in days.
The full sample shows high negative skewness for one day return because of the
1987 crash. On the post 1990 sample negative skewness rises with aggregation
horizon up to 1 year and then slowly decays toward zero. Both samples show
significant skewness for horizons of several years. We should note that
confidence bounds around skewness curves are quite wide due to the persistence
and high volatility of the squared returns and serial correlation of the
overlapping observations. Nevertheless, both the shape and the level of the
empirical skewness in figure \ref{skewSP500} and the theoretical estimate
shown in figure \ref{CondSkew} are quite similar.%
\begin{figure}
[t]
\begin{center}
\includegraphics
[trim=-1.272384pt 6.377927pt 1.272386pt -6.377914pt,
natheight=256.750000pt,natwidth=361.062500pt,height=172.875pt,width=242.625pt]%
{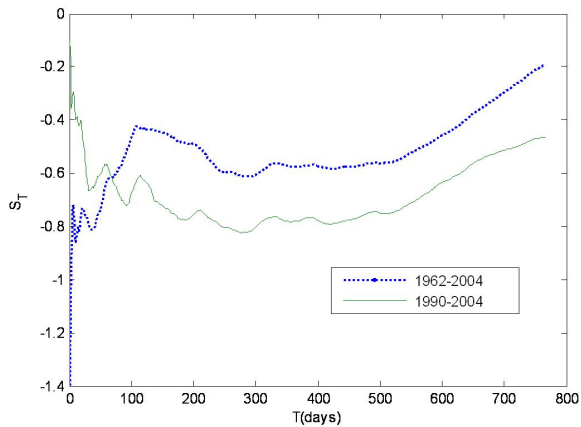}%
\caption{{\small Term structure of skewness for SP500 \ time aggregated log
returns estimated with overlapping samples moments for full and post-1990
data.}}%
\label{skewSP500}%
\end{center}
\end{figure}

\subsection{\label{SectionMultModel}Multivariate model\ with TARCH(1,1) factor
volatility dynamics}

Let us now turn to a multi-variate model of equity returns for $M$ companies,
with a simple dynamic factor structure decomposing the returns into a common
(market) and idiosyncratic components. To concentrate on the time dimension of
the model we assume a homogeneity of cross-sectional return properties, namely
that factor loadings and volatilities of idiosyncratic terms are constant and
identical for all stocks. Thus, our homogeneous one factor ARCH\ model has the
following form.%
\[
r_{i,t}=br_{m,t}+\sigma\varepsilon_{i,t}%
\]
where

\begin{itemize}
\item $b\geq0$ is the constant market factor loading and it is the same for
all stocks

\item $r_{m,t}$ is the market factor return with zero conditional mean
$E_{t-1}(r_{m,t})=0$, conditional volatility $\sigma_{m,t}^{2}\equiv
E_{t-1}(r_{m,t}^{2})$ that has TARCH(1,1) parametrization (\ref{tarch11})

\item $\sigma\varepsilon_{i,t}$ are the idiosyncratic return components with
constant volatilities $\sigma^{2}$ and zero conditional means $E_{t-1}%
(\sigma\varepsilon_{i,t})=0$

\item $\left\{  \varepsilon_{i,t},\varepsilon_{m,t}\right\}  $ are unit
variance iid shocks for each $t$ and all $i$
\end{itemize}

Because of the simple linear factor structure and constant market loadings
time aggregated equity returns $R_{i,T}=%
{\displaystyle\sum\limits_{u=1}^{T}}
r_{i,u}$ also have a one factor representation\footnote{To simplify the
notations we assume that the initial time $t=0$ and use only subscipt for the
time aggregation horizon $T.$}%
\begin{equation}
R_{i,T}=bR_{m,T}+E_{i,T}%
\end{equation}
where $R_{m,T}=%
{\displaystyle\sum\limits_{u=1}^{T}}
r_{m,u}$ and $E_{i,T}=\sigma%
{\displaystyle\sum\limits_{u=1}^{T}}
\varepsilon_{i,u}$ are independent conditional on $\digamma_{0}$.

\subsection{\label{section_tail_risk_default_corr}Tail risk and default
correlation estimates}

Assuming that the latent variables in the copula framework follow the one
factor TARCH(1,1) dynamics, we can calculate the coefficient of lower tail
dependence $\lambda_{i,j}^{d}$\ and the default correlation coefficient
$\rho_{i,j}^{d}\left(  p\right)  $, which\ were defined in equations
(\ref{tail_dependence_coeff}) and (\ref{pairwise_default_corr}), respectively.
Default correlation coefficient for Gaussian and Student-t can be computed in
closed form, while the factor GARCH/TARCH\ models require Monte Carlo
simulation, which is desribed in appendix C. Figure \ref{Figure_DefCorrBnds}
shows the numerical estimates of the default correlation $\rho_{1,2}^{d}$ as a
function of $p$ for 4 different models - TARCH, GARCH and Student-t and
Gaussian copulae.

The linear correlation of latent returns is set to 0.3 for all 3 models.
TARCH\ and GARCH\ are calibrated to have volatility dynamics parameters
corresponding approximately to the weekly SP500 returns and the time
aggregation horizon is set to 5 years. The degrees of freedom parameter for
the Student-t idiosyncracies and T-Copula is set to be equal to 12.

The GARCH distribution is symmetric and has smaller tail dependence for both
upper and lower tails. We can see on the graph that it also has lowest default
correlation for all default probabilities in the range of [0.01,0.2]. The
Student-t copula is also symmetric but has fatter joint tails compared to the
GARCH. Its default correlation is above GARCH for all p and converges to a
positive number (the tail dependence coefficient) as p decreases to zero.

We can see that TARCH\ has higher default correlation than other 2 models and
is upward sloping for very low quantiles. The upturn for the extreme tails is
a consequence of the left tail shape of the common factor. The default
correlation for very low default probabilities should be close to 1 since the
left tail of the factor is fatter than the left tail of the idiosyncratic
shocks. As we showed in the previous sections, both kurtosis and skewness of
the market factor declines faster for GARCH\ than TARCH given the same level
of volatility persistence.

We also show the $95\%$ confidence bounds for default correlation, which are
calculated using 1000 independent repetitions of 10,000 Monte Carlo
simulations of the common factor. TARCH, T-Copula and GARCH models have
virtually non-overlapping confidence bounds for $\rho_{i,j}^{d}\left(
p\right)  $ for sufficiently low values of the default probability [0.01,0.1].
Since the correlation skew for sufficiently large values of the detachment
point $K$ is also related to the limit of low default probabilities, we can
therefore rely on this result in the next section to assert that the
differences between the correlation surfaces generated by these three models
are statistically significant.%

\begin{figure}
[t]
\begin{center}
\includegraphics
[natheight=266.125000pt,natwidth=512.000000pt,height=163.5pt,width=313.25pt]%
{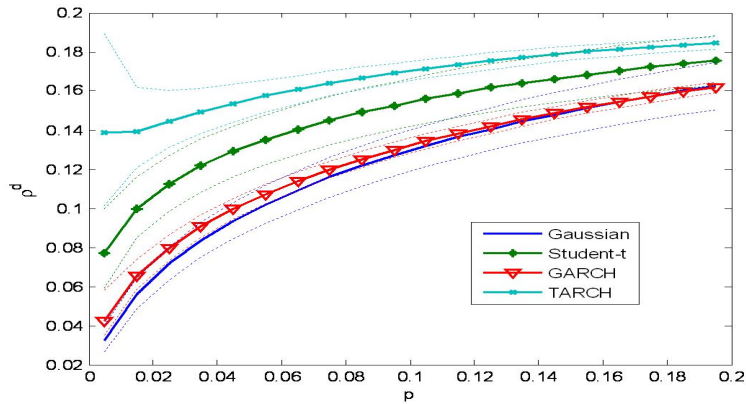}%
\caption{{\small Default correlation as a function of }$p$ {\small for
Gaussian copula, T-Copula, GARCH and TARCH models. The linear correlation
parameter is 0.3 for all four models. TARCH(GARCH) parameters are}%
${\small \ \ \alpha=\ 0.01(0.06),\ \alpha}_{d}{\small =\ 0.1(0),\ \beta
\ =\ 0.92(0.92)}${\small . The degrees of freedom parameter for T-Copula and
TARCH Student-t idiosyncracies is 12.}}%
\label{Figure_DefCorrBnds}%
\end{center}
\end{figure}

\section{\label{section_model_comparison}Comparing Credit Portfolio Loss
Generating Models}

Our goal in this paper is to provide a general framework for judging the
versatility of various portfolio credit risk models. All such models, whether
defined via dynamic multivariate returns model like in this paper or in
various versions of the static copula framework (\cite{Anderson-Sidenius},
\cite{Frey-McNeil-2001}, \cite{Gregory-Laurent-2003}, \cite{Li-2000},
\cite{Mashal-Naldi-Zeevi-2003} and \cite{Schonbucher-Shubert-2001}), can be
characterized by the full term structure of loss distributions. Thus, without
loss of generality, we can refer to all models of credit risk as loss
generating models, with an implicit assumption that any two models that
produce identical loss distributions for all terms to maturity are considered
to be equivalent.

The correlation surface, introduced in section
\ref{section_correlation_spectrum}, conveniently transforms specific choice of
a loss generating model into a two dimensional surface $\rho(K,T)$ of the
Gaussian copula correlation parameter, with the main dimensions being the loss
threshold (detachment level) $K$ and the term to maturity $T$. All other
inputs such as the recovery rate $R$, the term structure of (static) hazard
rates $h$, the level of linear asset correlation $\zeta$, the Student-t
degrees of freedom $\nu$, various GARCH model coefficients, etc. -- are
considered as model parameters upon which the two-dimensional correlation
surface itself depends.

Note that in the previous sections we have expressed the correlation surface
as a function of detachment level and the underlying portfolio's cumulative
expected default probability $p$ rather than the term to maturity $T$. Given
our assumption of the static term structure of the hazard rates $h$ these two
formulations are equivalent. In this section we prefer to emphasize the
dependence on maturity horizon in order to facilitate the comparison with base
correlation models and also to analyze the dependence on the level of hazard
rates separately from the term to maturity dimension.

Of course, the correlation surface of a static Gaussian copula model
\cite{Li-2000} is a flat surface with constant correlation across both
detachment level $K$ and term to maturity $T$. Any deviation from a flat
surface is therefore an indication of a non-trivial loss generating model, and
we can judge which features of the model are the important ones by examining
how strong a deviation from flatness they lead to.

\subsection{\label{section_static_models}Models with static dependence
structure}

Let us begin with the analysis of one of the popular static loss generation
models. On figure \ref{FigureStaticCorrSlices} we show the correlation surface
computed for the Student-t copula with linear correlation $\rho=0.3$ and
$\nu=12$ degrees of freedom. Student-t copula is in the same elliptic family
as the Gaussian copula but has non-zero tail dependence governed by the
degrees of freedom parameter. As a model of single-period asset returns the
Student-t distribution has been shown to provide a significantly better fit to
observations than the standard normal \cite{Mashal-Naldi-Zeevi-2003}.%
\begin{figure}
[t]
\begin{center}
\includegraphics
[natheight=247.562500pt,natwidth=556.937500pt,height=157.4375pt,width=352.6875pt]%
{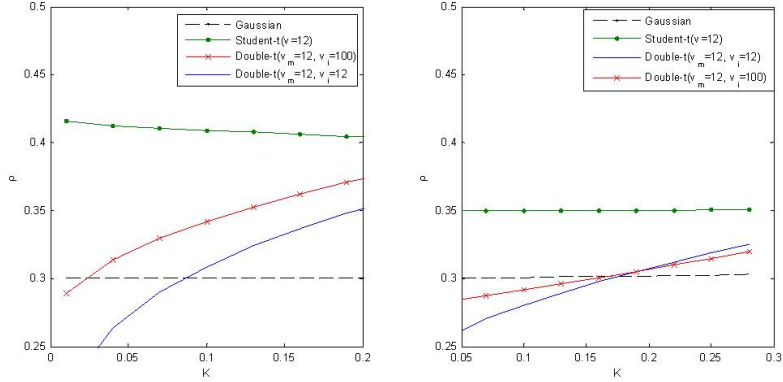}%
\caption{{\small Correlation surface slices corresponding to 1-year (left
chart) and 5-year (right chart) horizons with default probabilities 0.02 and
}$1-\left(  1-0.02\right)  ^{5}=0.0961$ {\small for Gaussian, Student-t Copula
with 12 degrees of freedom, and Double-t Copulas with (v}$_{m}=${\small 12,
v}$_{i}=${\small 100) and (v}$_{m}=${\small 12, v}$_{i}=${\small 12) degrees
of freedom. Linear correlation is 0.3 for all copulas.}}%
\label{FigureStaticCorrSlices}%
\end{center}
\end{figure}

However, from the figure \ref{FigureStaticCorrSlices} we can see that the
static Student-t copula does not generate a notable skew in the direction of
detachment level $K$, and in fact generates a mild downward sloping skew for
very short terms, which is contrary to what is observed in the market. The
main reason for this is the rigid structure of this model, with the tails of
the idiosyncratic returns tied closely to the tails of the market factor. This
follows from the representation of the Student-t copula as a mixture model
\cite{OKane-Schloegl-2003}. Instead of producing a varying degree of
correlation depending on the default threshold, the Student-t copula model
simply produces a higher overall level of correlation.

On the other hand, the more flexible double-t copula model
\cite{HullWhite2004} produces a steep upward sloping skew, as can be seen from
figure \ref{FigureStaticCorrSlices}. The main feature of the double-t copula
that is responsible for the skew is the cleaner separation between the common
factor and idiosyncratic returns -- there is no longer a single mixing
variable which ties the two sources of risk together. As a result, the
idiosyncratic returns get efficiently diversified in the LHP framework and
their contribution becomes progressively smaller for farther downside returns.
Since the higher value of the detachment level $K$ corresponds to farther
downside tails, the greater dominance of the market factor translates into
higher effective correlation for higher $K$, i.e. upward sloping correlation skew.

Furthermore, by making the fully independent idiosyncratic returns more fat
tailed one achieves a steeper skew -- compare the two examples of the double-t
copula, with the degrees of freedom of the idiosyncratic returns set to 100
(i.e. nearly Gaussian case) and to 12 (i.e. strongly fat-tailed case),
respectively. Indeed, for the same detachment level $K$ the idiosyncratic
returns with lower degrees of freedom (stronger fat tails) are less dominated
by the market factor, resulting in relatively lower effective correlation.
Since the difference between two cases diminishes as $K$ grows, this
translates into steeper correlation skew for fatter-tailed idiosyncratic returns.

Finally, we observe that the slope of the correlation skew gets flatter as the
time horizon grows. Within the context of double-t copula this is simply
because the same detachment level $K$ corresponds to less extreme tails when
the term to maturity is greater. Following the same logic as above, this means
less steep correlation skew.

All of these features will have their close counterparts in the dynamic models
which we will consider next.

\subsection{\label{section_dynamic_models}Multi-period (dynamic) loss
generating models}

Let us now turn to loss generating models based on latent variables with
multi-period dynamics. We have concluded in the previous section that a clean
separation of the market factor and the idiosyncratic returns appears to be a
pre-requisite for producing an upward sloping correlation skew. Fortunately,
the dynamic multi-variate models which we considered in section
\ref{section_equity_return} all have this property, both for single-period and
for aggregated returns.

On figure \ref{CorrSpectrum_GARCH_Gaussian} we show the correlation surface
computed for a loss generating model based on GARCH dynamics with Gaussian
residuals, with a linear correlation set to $\rho=0.3$, and GARCH model
parameters taken from the weekly SP500 estimates in appendix
\ref{appendix_SP500}. The loss distributions and the correlation surfaces are
calculated using a Monte Carlo simulation with 100,000 trials.%

\begin{figure}
[t]
\begin{center}
\includegraphics
[natheight=177.875000pt,natwidth=461.687500pt,height=140.75pt,width=363.0625pt]%
{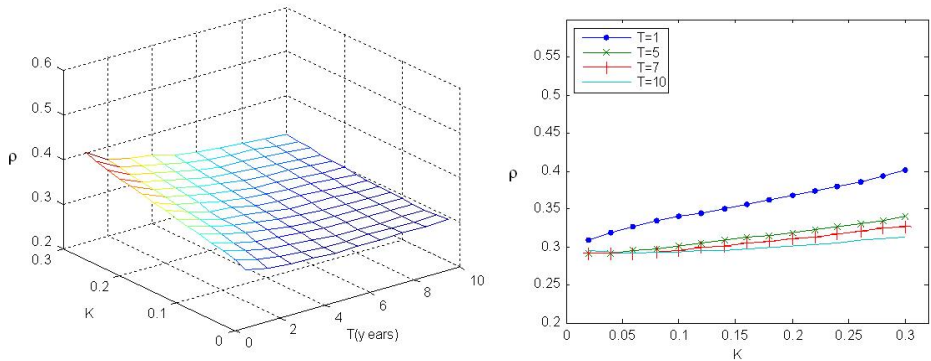}%
\caption{{\small Correlation surface for GARCH model (}$\alpha${\small =0.045,
}$\beta${\small =0.948) with Gaussian shocks and the slices of the surface for
1, 3, 5 and 7 year maturities.}}%
\label{CorrSpectrum_GARCH_Gaussian}%
\end{center}
\end{figure}
%

\begin{figure}
[t]
\begin{center}
\includegraphics
[natheight=179.250000pt,natwidth=460.937500pt,height=142.5pt,width=363.8125pt]%
{Fig_CorrSurf_GARCH_T}%
\caption{{\small Correlation surface for GARCH model (}$\alpha${\small =0.045,
}$\beta${\small =0.948) with Student-t shocks (v=8.3) and the slices of the
surface for 1, 3, 5 and 7 year maturities.}}%
\label{CorrSpectrum_GARCH_StudentT}%
\end{center}
\end{figure}

As we can see, this model does exhibit a visible deviation from the flat
correlation surface for short maturities. However, as we already noted in
section \ref{section_equity_return}, the distribution of aggregate returns for
the symmetric GARCH model quickly converges to normal. Therefore, it is not
surprising to see that the correlation surface also flattens out fairly
quickly and becomes virtually indistinguishable from a Gaussian copula for
maturities beyond 5 years. Thus, we conclude that the symmetric GARCH model
with Gaussian residuals is inadequate for description of liquid tranche
markets where one routinely observes steep correlation skews at maturities as
long as 7 and 10 years.

Based on the empirical results of Section \ref{SectionMultModel} we know that
a GARCH model with Student-t residuals provides a better fit to historical
time series of equity returns. A natural question is whether allowing for such
volatility dynamics can lead to a persistent correlation skew commensurate
with the levels observed in synthetic CDO markets.

The results of section \ref{SectionMultModel} suggest that the additional
kurtosis of the single-period returns represented by the Student-t residuals
does not matter very much for aggregate return distributions at sufficiently
long time horizons. Indeed, figure \ref{CorrSpectrum_GARCH_StudentT} shows
that the GARCH model with Student-t residuals exhibits a correlation skew that
is quite a bit steeper at the short maturities, yet is almost as flat and
featureless at the long maturities as its non-fat-tailed counterpart -- there
is a small amount of skew at 10 years, but it is too small compared to the
steepness observed in the liquid tranche markets. Thus, we conclude that one
has to focus on the dynamic features of the market factor process in order to
achieve the desired correlation skew effect.

Our next candidates are the TARCH models with either Gaussian or Student-t
return innovations. We have seen in section \ref{section_equity_return} that
the asymmetric volatility dynamics of these models leads to a much more
persistent skewness and kurtosis of aggregated equity returns that actually
grow rather than decay at very short horizons, and survive for as long as 10
years for the range of parameters corresponding to the post-1990 sample of
SP500 weekly log-returns. Hence, our hypothesis is that a latent variable
model with TARCH market dynamics might be capable of producing a non-trivial
credit correlation skew for maturities of up to 10 years.

Figures \ref{CorrSpectrum_TARCH_Gaussian} and
\ref{CorrSpectrum_TARCH_StudentT} show the correlation surfaces for the
TARCH-based loss generating models. The most immediate observation is that
both versions of the model produce a rather persistent correlation skew.
Although the correlation surface flattens out with growing term to maturity,
the steepness of the skew is still quite significant even at 10 years. Just as
in the case of the symmetric GARCH model, the fat-tailed residuals lead only
to marginal steepening of the correlation surface compared to the case with
Gaussian residuals.%
\begin{figure}
[t]
\begin{center}
\includegraphics
[natheight=176.437500pt,natwidth=457.375000pt,height=140.4375pt,width=361.75pt]%
{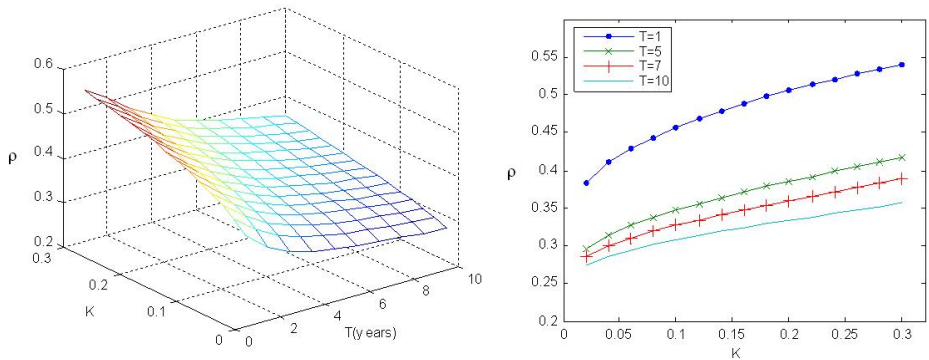}%
\caption{{\small Correlation surface for TARCH model (}$\alpha${\small =0.004,
}$\alpha_{d}${\small =0.094, }$\beta${\small =0.927) with Gaussian shocks and
the slices of the surface for 1, 3, 5 and 7 year maturities.}}%
\label{CorrSpectrum_TARCH_Gaussian}%
\end{center}
\end{figure}
%

\begin{figure}
[t]
\begin{center}
\includegraphics
[natheight=179.250000pt,natwidth=461.687500pt,height=140.4375pt,width=359.25pt]%
{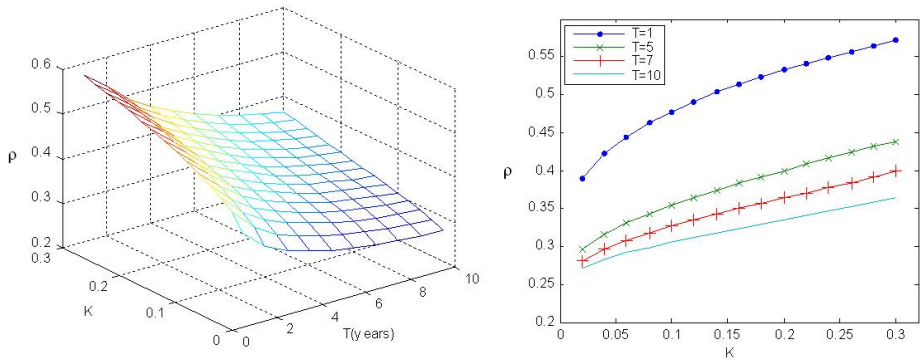}%
\caption{{\small Correlation surface for TARCH model (}$\alpha${\small =0.004,
}$\alpha_{d}${\small =0.094, }$\beta${\small =0.927) with Student-t
shocks(v=8.3) and the slices of the surface for 1, 3, 5 and 7 year
maturities.}}%
\label{CorrSpectrum_TARCH_StudentT}%
\end{center}
\end{figure}

Contrast these properties of the dynamic GARCH-based models with the features
of the static double-t copula. Upon a closer inspection of figures
\ref{FigureStaticCorrSlices} and \ref{CorrSpectrum_TARCH_Gaussian} we can see
that the TARCH model with Gaussian shocks and Gaussian idiosyncrasies produces
a slightly steeper 5-year correlation skew than the double-t copula, even when
the latter is taken with fat-tailed idiosyncrasies. When we turn on the
Student-t return residuals for the market factor dynamics (see figure
\ref{CorrSpectrum_TARCH_StudentT}) the differences in the 5-year skew become
quite significant.

The explanation of the correlation skew in the dynamic TARCH-based models is
similar to the static double-t copula when one considers a particular time
horizon. The separation of aggregate returns for the common market factor and
idiosyncratic factors remains valid for all time horizons. The diminishing
importance of the idiosyncratic returns compared to the market factor for the
greater values of detachment level $K$ explains most of the steepness of the
correlation skew. Practitioners using the static models often have to assume
heuristic term structure dependence for base correlations, typically without
fundamental reasons why one choice or another is preferred, and consequently
leading to biased relative value assessments between tranches of different
maturities. Our TARCH-based model, in contrast, produces a characteristic
pattern of correlation skew dependence on $T$ which is driven by the speed of
convergence of the idiosyncratic factors to normal and can serve as a starting
point for such relative value assessments. We confine our discussion of this
subject to highlighting of this possibility, since its detailed analysis in
the context of actual market prices is well beyond the scope of our paper.

\subsection{\label{section_sensitivity}Sensitivity to model parameters and
hedging applications}

The predictable shape of the correlation surface, demonstrated above, is
complemented by an equally important feature of the dynamic loss generating
models -- they lead to a well-defined dependence of the correlation skew on
model parameters. The ability to calculate the sensitivity of expected losses
to underlying parameters is crucial in risk management applications. In
particular, sensitivity of the correlation surface with respect to the
underlying portfolio hazard rate leads to a non-trivial adjustment of the CDO
tranche deltas.

First, let us demonstrate this sensitivity for TARCH model with Gaussian
idiosyncrasies. Figure \ref{CorrSpect_Hazard_TARCH_Gaussian} shows the
correlation skew $\rho\left(  K,p_{t}\left(  h\right)  ,\overset{\_}%
{R}\right)  $ of the 5-year tranches with various detachment levels $K$ as a
function of varying portfolio hazard rate $h$. The range of variation is
chosen from 100bp to 500bp, which corresponds roughly to portfolio spreads
ranging from 40bp to 200bp, with $\overset{\_}{R}=0.4$. From the visual
comparison of figures \ref{CorrSpect_Hazard_TARCH_Gaussian} and
\ref{CorrSpectrum_TARCH_Gaussian} it appears that the dependence of the
correlation skew for a fixed term to maturity but varying level of hazard
rates is very similar to the dependence of the correlation surface on the term
to maturity. This similarity is natural, since the first order effect is the
dependence on the level of the cumulative default probability $p_{t}\left(
h\right)  =1-e^{-ht}$ which depends on the product of $h\cdot t$ rather than
on the hazard rate or the term to maturity separately. For each level of this
product, we get a specific level of the default threshold in the structural
credit risk model. The higher this threshold, the closer is the sampled region
to the center of the latent variables distribution and the less it is affected
by the tail risk -- thus leading to a lower level and flatter skew of the
credit correlation.

However, there is a second order effect which makes these two dependencies
somewhat different. Let us recall first that idiosyncrasies with less fat
tails (higher degrees of freedom) correspond to flatter correlation skew, as
we argued in the previous section. Since the idiosyncrasies converge to normal
distribution faster than the market factor as the return aggregation horizon
grows, we can deduce that the dependence on the term to maturity with fixed
hazard rate should exhibit a faster flattening of the correlation surface than
the dependence on the hazard rate with fixed term to maturity.

The right hand side figure in \ref{CorrSpect_Hazard_TARCH_Gaussian} shows a
comparison of the change in correlation when going from 5-year horizon to
10-year horizon with constant hazard rate set at 100bp, against the change in
correlation of fixed 5-year slice when hazard rate goes from 100bp to 200bp.
One can observe that the term-to-maturity extension indeed causes a greater
degree of flattening than the hazard rate increase.%

\begin{figure}
[t]
\begin{center}
\includegraphics
[natheight=173.562500pt,natwidth=460.937500pt,height=134.5625pt,width=354.75pt]%
{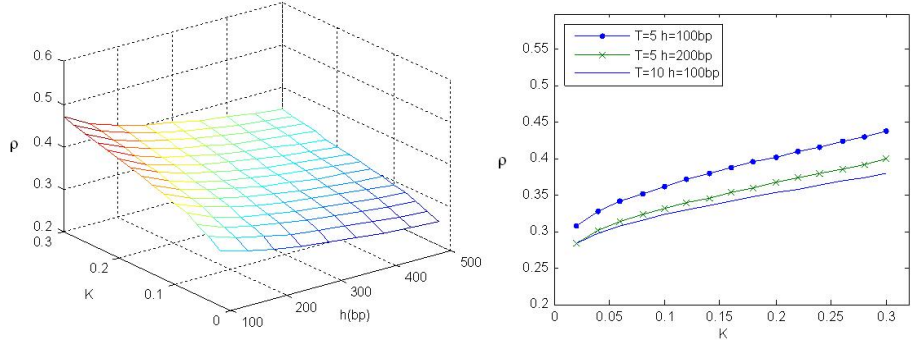}%
\caption{{\small Left figure: the dependence of the 5-year correlation skew on
the level of the hazard rates. Right figure shows 3 correlation surface
slices, contrasting the flattening of the skew with growing hazard rates and
term to maturity.}}%
\label{CorrSpect_Hazard_TARCH_Gaussian}%
\end{center}
\end{figure}

The dependence of the correlation surface on hazard rate has a strong effect
on CDO tranche deltas. The precise calculation of tranche deltas in our
framework requires one to derive the tranche loss probabilities from the shape
of the correlation surface, as outlined in section
\ref{section_correlation_spectrum}, and then use the standard pricing
techniques described in Schonbucher \cite{SchonbucherBook}. However, the
magnitude of the adjustment can be estimated more easily by neglecting the
interest rates and looking only on the protection leg of the given equity
tranche. The tranche delta is dominated by the hazard rate sensitivity of the
present value of its protection leg, which in turn is proportional to the
expected tranche loss:%

\begin{equation}
\Delta_{\left(  0,K\right]  }\left(  p_{t}\right)  \propto\frac{dEL_{\left(
0,K\right]  }(t)}{dh}=E_{h}^{G}L_{\left(  0,K\right]  }(t)+\rho_{h}\left(
K,p_{t},\overset{\_}{R}\right)  E_{\rho}^{G}L_{\left(  0,K\right]  }(t)
\label{delta_proportional}%
\end{equation}
where, in accordance with the definition of the correlation surface, we have:%

\begin{equation}
EL_{\left(  0,K\right]  }\left(  t\right)  =E^{G}L_{\left(  0,K\right]
}\left(  \rho\left(  K,p_{t}\left(  h\right)  ,\overset{\_}{R}\right)
,p_{t}\left(  h\right)  \right)  \label{ELKt}%
\end{equation}

Let us define the delta adjustment factor as the percentage adjustment to the
tranche delta compared to fixed-correlation Gaussian delta of the tranche:%

\begin{equation}
\Delta_{\left(  0,K\right]  }\left(  p_{t}\right)  =\Delta_{\left(
0,K\right]  }^{G}\left(  p_{t}\right)  \left(  1+\delta_{adj}\left(
K,p_{t}\right)  \right)  \label{delta_decomp}%
\end{equation}

Since the first term in eq. (\ref{delta_proportional}) corresponds to the
Gaussian delta of the tranche, we can see that the delta adjustment factor is
equal to the correlation surface sensitivity times the tranche loss
sensitivity ratio:%

\begin{equation}
\delta_{adj}\left(  K,p_{t}\right)  =\rho_{h}\left(  K,p_{t},\overset{\_}%
{R}\right)  \frac{E_{\rho}^{G}L_{\left(  0,K\right]  }(t)}{E_{h}^{G}L_{\left(
0,K\right]  }(t)}=\rho_{h}\left(  K,p_{t},\overset{\_}{R}\right)  \Psi\left(
K,p_{t}\right)  \label{delta_adj}%
\end{equation}
where%

\begin{gather}
E_{\rho}^{G}L_{\left(  0,K\right]  }(t)=-\frac{1-\overset{\_}{R}}{2\sqrt{\rho
}}\phi\left(  \Phi^{-1}\left(  p_{t}\right)  ,-d_{1};-\sqrt{\rho}\right) \\
E_{h}^{G}L_{\left(  0,K\right]  }(t)\equiv\frac{dp_{t}}{dh}E_{p}^{G}L_{\left(
0,K\right]  }(t)=\left(  1-p_{t}\right)  t\left(  1-\overset{\_}{R}\right)
\Phi\left(  \frac{-d_{1}+\sqrt{\rho}\Phi^{-1}\left(  p_{t}\right)  }%
{\sqrt{1-\rho}}\right) \\
d_{1}=\frac{1}{\sqrt{\rho}}\Phi^{-1}\left(  p_{t}\right)  -\frac{\sqrt{1-\rho
}}{\sqrt{\rho}}\Phi^{-1}\left(  \frac{K}{1-\overset{\_}{R}}\right)
\end{gather}
and the functions are evaluated at $\rho=\rho\left(  K,p_{t}\left(  h\right)
,\overset{\_}{R}\right)  $. Thus, the tranche loss sensitivity ratio is given by:%

\begin{equation}
\Psi\left(  K,p_{t}\right)  =\frac{E_{\rho}^{G}L_{\left(  0,K\right]  }%
(t)}{E_{h}^{G}L_{\left(  0,K\right]  }(t)}=-\frac{1}{2\left(  1-p_{t}\right)
t\sqrt{\rho}}\frac{\phi\left(  \Phi^{-1}\left(  p_{t}\right)  ,-d_{1}%
;-\sqrt{\rho}\right)  }{\Phi\left(  \frac{-d_{1}+\sqrt{\rho}\Phi^{-1}\left(
p_{t}\right)  }{\sqrt{1-\rho}}\right)  } \label{exploss_adj}%
\end{equation}

Figure \ref{Fig_AdjDelta} illustrates these calculations for the case of
TARCH-based dynamic loss generating model. We can see that the systematic
delta estimates in our model are anywhere from 40\% to 100\% greater than the
conventional fixed-correlation delta estimates stemming from the Gaussian
copula model. The adjustment is greatest for the lowest values of $K$, and
drops quickly as the detachment level grows. While comparison with the actual
tranche market price sensitivity is beyond the scope of our paper, we would
like to note that both the the significant under-estimation of the equity
tranche deltas and the relatively smaller amount of the delta error for more
senior tranches are in line with the market experience during the correlation
dislocation in May-September of 2005.%

\begin{figure}
[t]
\begin{center}
\includegraphics
[natheight=2.267500in,natwidth=6.000100in,height=1.8291in,width=4.8032in]%
{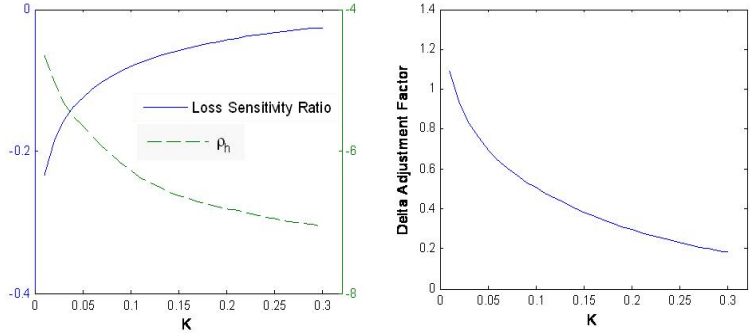}%
\caption{{\small Left figure: the dependence of the tranche loss sensitivity
ratio and }$\rho_{h}$ {\small on }$K${\small . Right figure shows the product
of the delta adjustment factor as a function of }$K$. {\small The base case
corresponds to TARCH model (}$\alpha${\small =0.004, }$\alpha_{d}%
${\small =0.094, }$\beta$={\small 0.927}) {\small with Gaussian
idiosyncracies, aggregated over 5 year horizon. Portfolio hazard rate is equal
to 100bps.}}%
\label{Fig_AdjDelta}%
\end{center}
\end{figure}

\section{\label{section_conclusions}Summary and Conclusions}

In this paper we have introduced and studied a new class of credit correlation
models defined as an extension of the structural credit model where the latent
variables follow a factor-ARCH process with asymmetric volatility dynamics. To
build the foundation for our model, we have studied the time aggregation
properties of the multivariate dynamic models of equity returns. We showed
that the dynamics of equity return volatilities and correlations leads to
significant departures from the Gaussian distribution even for horizons
measured in several years. The asymmetry appears to "survive aggregation"
longer than fat tails based on the parameters estimated from the real data.
The main source of skewness and kurtosis of the return distribution for long
horizons is the dynamic asymmetry of volatility response to return shocks.

We introduced the notion of the correlation surface as a tool for comparing
loss generating models, whether defined via a single-period (static) copula,
or via multi-period (dynamic) latent-variable framework, and for simple and
consistent approach to non-parametric pricing of CDO tranches. We showed that
portfolio loss distributions with smooth pdf can be easily reconstructed from
the correlation surface using its level and slope along the K-dimention.

We considered the differences in the correlation surfaces generated by static
models, including Gaussian, Student-t and Double-t copula, and dynamic models
including GARCH(1,1) with Gaussian \ and Student-t shocks, and TARCH(1,1) with
Gaussian and Student-t shocks. From this comparison we can conclude that the
most relevant stylized facts for explanation of the market observable
correlation skew are, in order of their importance:

\begin{itemize}
\item the independence of the market factor and idiosyncratic returns (no
common mixing variables);

\item the persistent asymmetry of aggregate return distribution of the market
factor, which in case of TARCH models occurs as a consequence of the
asymmetric volatility dynamics;

\item the fat tails in the market factor returns;

\item the fat tails in the idiosyncratic returns;

\item the slower convergence to Normal distribution of the market factor
compared to idiosyncratic returns;
\end{itemize}

Importantly, in our dynamic framework, the correlation surface is not only
explained, but predicted -- based on empirical parameters of the TARCH process
and the parameters describing the reference credit portfolio. The model also
predicts a specific sensitivity of the correlation surface to changes in
various parameters, including the average hazard rate of the underlying portfolio.

The inability of static models to incorporate changing base correlations are
at the heart of the difficulties faced by these models during the credit
market dislocations. In particular, our model reveals that the systematic
deltas of equity tranches are understated by the industry standard static
copula models, since the growing portfolio spread (hazard rate) should lead to
an additional drop in equity tranche prices due to decreasing implied
correlation level. Similarly, the static models require making additional
assumptions about the correlation skew at different maturities. The early
market convention of keeping this skew constant which some practitioners still
adhere to is very far from realistic as can be seen from the results of
section \ref{section_dynamic_models}. The more reasonable assumption is that
the correlation level decreases and the skew flattens for longer horizons.

We should note however, that these conclusions are based on an implicit
assumption that the model includes a single common return factor, and that the
parameters of the dynamic model are constant over time. While this is a weaker
assumption that an outright imposition of the constant correlation skew, it
may still be too strict in some circumstances. A possible direction for
generalization of our model is to move from a single market factor to a
multi-factor framework which can make the model much more flexible in terms of
both the detachment level $K$ and term structure $T$ dependence of the
correlation surface $\rho=\rho(K,T)$. The well-documented importance of both
macro and industry factors for explanation of equity returns suggests that
such a generalization is not only desirable from calibration point of view but
also warranted empirically. While the analytical tractability of the model
will suffer, the numerical accuracy will likely remain intact when using Monte
Carlo simulations.

Whether in a single factor or a multi-factor setting, many of our conclusions
reflect the limitations of the large homogeneous portfolio approximation which
we have adopted in this paper. In particular, it is clear that even
deterministic but heterogenous idiosyncrasies, market factor loadings and
hazard rates could lead to significant changes in portfolio loss distribution
and consequently to the correlation surface of the model. An extension of our
model to such heterogenous case is possible, although the computational
efforts will increase very significantly.

In conclusion we note that the ARCH family of time series models
\cite{EngleArch} had proven quite successful in explaining the behavior of
implied volatility smile and skew and stock index option pricing
\cite{EngleSkew}. Given the similar empirical motivation of our model, and
multi-faceted analogies with equity derivatives pricing outlined throughout
the paper, we believe that our approach can lead to similar advances in the
portfolio credit risk modeling, and shed new light on pricing of CDO tranches.

\appendix

\section{\label{appendix_aggregate_TARCH}Kurtosis and Skewness of Aggregated
TARCH Returns}

In this notes we analyze kurtosis and skewness of aggregated returns $R_{T}=%
{\displaystyle\sum\limits_{t=1}^{T}}
r_{t}$ when $r_{t}$ is assumed to follow TARCH(1,1) process%

\begin{align*}
r_{t}  &  =\sigma_{t}\varepsilon_{t}\\
\sigma_{t}^{2}  &  =\left(  1-\zeta\right)  \sigma^{2}+\alpha r_{t-1}%
^{2}+\alpha_{d}r_{t-1}^{2}1_{\left\{  r_{t-1}\leq0\right\}  }+\beta
\sigma_{t-1}^{2}%
\end{align*}
where returns innovations $\varepsilon_{t}$ are assumed to be iid, have zero
mean and unit variance. We are interested in variance, skewness and kurtosis
of time aggregated returns. To make sure that those moments are finite we need
corresponding moments of the return innovations to be finite. Particularly, we
assume that $\varepsilon_{t}$ has finite kurtosis. Let us introduce the
following notations for the central and truncated moments of $\varepsilon_{t}$%
\begin{align*}
m_{\varepsilon}  &  \equiv E\left(  \varepsilon_{t}\right)  =0\\
v_{\varepsilon}  &  \equiv E\left(  \varepsilon_{t}^{2}\right)  =1\\
v_{\varepsilon}^{d}  &  \equiv E\left(  \varepsilon_{t}^{2}1_{\left\{
\varepsilon_{t}\leq0\right\}  }\right) \\
s_{\varepsilon}  &  \equiv E\left(  \varepsilon_{t}^{3}\right) \\
s_{\varepsilon}^{d}  &  \equiv E\left(  \varepsilon_{t}^{3}1_{\left\{
\varepsilon_{t}\leq0\right\}  }\right) \\
k_{\varepsilon}  &  \equiv E\left(  \varepsilon_{t}^{4}\right) \\
k_{\varepsilon}^{d}  &  \equiv E\left(  \varepsilon_{t}^{4}1_{\left\{
\varepsilon_{t}\leq0\right\}  }\right)
\end{align*}

\begin{lemma}
The following recursions hold for TARCH(1,1) model%
\begin{align*}
cov_{t-1}\left(  r_{t}^{k},r_{t+u}^{2}\right)   &  =\rho cov_{t-1}\left(
r_{t}^{k},r_{t+u-1}^{2}\right)  \text{ for u%
$>$%
1}\\
cov_{t-1}\left(  r_{t}^{k}r_{t+1}^{2}\right)   &  =\alpha var_{t-1}\left(
r_{t}^{k+2}\right)  +\alpha_{d}var_{t-1}\left(  r_{t}^{k+2}1_{\left\{
r_{t}\leq0\right\}  }\right)
\end{align*}

\begin{proof}%
\begin{gather*}
cov_{t-1}\left(  r_{t}^{k},r_{t+u}^{2}\right)  =cov_{t-1}\left(  r_{t}%
^{k}\left[  \left(  1-\zeta\right)  \sigma^{2}+\alpha r_{t+u-1}^{2}+\alpha
_{d}r_{t+u-1}^{2}1_{\left\{  r_{t+u-1}\leq0\right\}  }+\beta\sigma_{t+u-1}%
^{2}\right]  \right) \\
=0+\alpha cov_{t-1}\left(  r_{t}^{k},r_{t+u-1}^{2}\right)  +\alpha
_{d}cov_{t-1}\left(  r_{t}^{k},r_{t+u-1}^{2}1_{\left\{  r_{t+u-1}%
\leq0\right\}  }\right)  +\beta cov_{t-1}\left(  r_{t}^{k},\sigma_{t+u-1}%
^{2}\right)
\end{gather*}
if u%
$>$%
1 then%
\begin{align*}
cov_{t-1}\left(  r_{t}^{k},r_{t+u-1}^{2}1_{\left\{  r_{t+u-1}\leq0\right\}
}\right)   &  =v_{\varepsilon}^{d}cov_{t-1}\left(  r_{t}^{k},r_{t+u-1}%
^{2}\right) \\
cov_{t-1}\left(  r_{t}^{k},\sigma_{t+u-1}^{2}\right)   &  =cov_{t-1}\left(
r_{t}^{k},r_{t+u-1}^{2}\right)
\end{align*}
If u=1 then
\[
cov_{t-1}\left(  r_{t}^{k},\sigma_{t+u-1}^{2}\right)  =0
\]

\end{proof}
\end{lemma}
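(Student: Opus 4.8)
The plan is to peel both identities down to the defining recursion for $\sigma_t^2$ by iterated conditioning, using only that a TARCH innovation $\varepsilon_s$ is independent of $\digamma_{s-1}$ with $E\varepsilon_s^2=1$. The one fact I would invoke repeatedly is this: for any $u\geq1$, $r_{t+u}^2=\sigma_{t+u}^2\varepsilon_{t+u}^2$ with $\sigma_{t+u}^2$ measurable with respect to $\digamma_{t+u-1}$ and $\varepsilon_{t+u}$ independent of it; since $r_t^k$ is $\digamma_t$-measurable and $t\leq t+u-1$, conditioning on $\digamma_{t+u-1}$ and pulling out the known factors gives $E_{t-1}(r_t^k r_{t+u}^2)=E_{t-1}(r_t^k\sigma_{t+u}^2)$, and the product of the two conditional means transforms identically, so $cov_{t-1}(r_t^k,r_{t+u}^2)=cov_{t-1}(r_t^k,\sigma_{t+u}^2)$.

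Next I would substitute $\sigma_{t+u}^2=(1-\zeta)\sigma^2+\alpha r_{t+u-1}^2+\alpha_d r_{t+u-1}^2 1_{\{r_{t+u-1}\leq0\}}+\beta\sigma_{t+u-1}^2$ into this covariance. Bilinearity kills the deterministic term $(1-\zeta)\sigma^2$, leaving
\[
cov_{t-1}(r_t^k,r_{t+u}^2)=\alpha\,cov_{t-1}(r_t^k,r_{t+u-1}^2)+\alpha_d\,cov_{t-1}(r_t^k,r_{t+u-1}^2 1_{\{r_{t+u-1}\leq0\}})+\beta\,cov_{t-1}(r_t^k,\sigma_{t+u-1}^2).
\]
Both recursions then follow from evaluating the last two covariances, and the split between $u>1$ and $u=1$ is precisely the question of whether $\sigma_{t+u-1}^2$ is still "in the future" relative to $r_t$.

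For $u>1$, since $\sigma_{t+u-1}>0$ the event $\{r_{t+u-1}\leq0\}$ equals $\{\varepsilon_{t+u-1}\leq0\}$, so $r_{t+u-1}^2 1_{\{r_{t+u-1}\leq0\}}=\sigma_{t+u-1}^2\varepsilon_{t+u-1}^2 1_{\{\varepsilon_{t+u-1}\leq0\}}$; because $u>1$, both $r_t^k$ and $\sigma_{t+u-1}^2$ are $\digamma_{t+u-2}$-measurable while $\varepsilon_{t+u-1}$ is independent of $\digamma_{t+u-2}$ with $E(\varepsilon_{t+u-1}^2 1_{\{\varepsilon_{t+u-1}\leq0\}})=v_{\varepsilon}^{d}$ and $E\varepsilon_{t+u-1}^2=1$. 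Conditioning on $\digamma_{t+u-2}$ then gives $cov_{t-1}(r_t^k,r_{t+u-1}^2 1_{\{r_{t+u-1}\leq0\}})=v_{\varepsilon}^{d}\,cov_{t-1}(r_t^k,\sigma_{t+u-1}^2)$ and $cov_{t-1}(r_t^k,\sigma_{t+u-1}^2)=cov_{t-1}(r_t^k,r_{t+u-1}^2)$, so the right-hand side collapses to $(\alpha+\alpha_d v_{\varepsilon}^{d}+\beta)\,cov_{t-1}(r_t^k,r_{t+u-1}^2)$, where $\alpha+\alpha_d v_{\varepsilon}^{d}+\beta=\zeta$ is the persistence coefficient (the $\rho$ appearing in the statement) — this is the first recursion. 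For $u=1$ the term $\sigma_{t+u-1}^2=\sigma_t^2$ is already $\digamma_{t-1}$-measurable, hence $cov_{t-1}(r_t^k,\sigma_t^2)=0$, and only the $\alpha r_t^2$ and $\alpha_d r_t^2 1_{\{r_t\leq0\}}$ contributions survive, which is the second recursion.

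The whole argument is routine iterated conditioning, so I do not anticipate an analytic obstacle; the only point requiring care is the measurability bookkeeping that drives the case split — for $u>1$ the conditioning $\sigma$-field $\digamma_{t+u-2}$ still contains $r_t$, so the shock $\varepsilon_{t+u-1}$ can be averaged out against a quantity already known at that time, whereas for $u=1$ the relevant volatility is determined at $t-1$ and contributes nothing — together with keeping the sign identity $1_{\{r_s\leq0\}}=1_{\{\varepsilon_s\leq0\}}$ and the truncated second moment $v_{\varepsilon}^{d}$ in the right places.
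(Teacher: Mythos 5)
Your proposal is correct and follows essentially the same route as the paper: substitute the TARCH recursion for $\sigma_{t+u}^2$ into the conditional covariance, then average out the shock $\varepsilon_{t+u-1}$ against $\digamma_{t+u-2}$ for $u>1$ (yielding the factor $\alpha+\alpha_d v_{\varepsilon}^{d}+\beta=\zeta$, correctly identifying the statement's $\rho$ as the persistence coefficient $\zeta$) and note that $\sigma_t^2$ is $\digamma_{t-1}$-measurable for $u=1$. Your version merely makes explicit the measurability bookkeeping that the paper leaves implicit, so no further comment is needed.
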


\begin{proposition}
Suppose $0\leq\zeta<1$ and the return innovations have finite skewness,
$s_{\varepsilon},$ and finite "truncated" third moment, $s_{\varepsilon}^{d},$
then conditional third moment of T-period aggregate return $R_{t,t+T}$ has the
following representation for TARCH(1,1)
\[
E_{t}R_{t,t+T}^{3}=s_{\varepsilon}%
{\displaystyle\sum\limits_{u=1}^{T}}
E_{t}\left(  \sigma_{t+u}^{3}\right)  +3\left(  \alpha s_{\varepsilon}%
+\alpha_{D}s_{\varepsilon}^{d}\right)
{\displaystyle\sum\limits_{u=1}^{T}}
\frac{1-\zeta^{T-u}}{1-\zeta}E_{t}\left(  \sigma_{t+u}^{3}\right)
\]
In addition if $E\sigma_{t}^{3}$ is finite then unconditional skewness of
$R_{t,t+T}$ is given by
\[
S_{T}\equiv\frac{ER_{t,t+T}^{3}}{E(R_{t,t+T}^{2})^{3/2}}=\left[  \frac
{1}{T^{1/2}}s_{\varepsilon}+3\frac{1}{T^{3/2}}\left(  \alpha s_{\varepsilon
}+\alpha_{d}s_{\varepsilon}^{d}\right)  \frac{T(1-\zeta)-1+\rho^{T}}%
{(1-\zeta)^{2}}\right]  E\left(  \frac{\sigma_{t}}{\sigma}\right)  ^{3}%
\]

\begin{proof}
Using Lemma 7 we have
\begin{align*}
E_{t}\left(
{\displaystyle\sum\limits_{u=t+1}^{t+T}}
r_{u}\right)  ^{3}  &  =E_{t}\left(
{\displaystyle\sum\limits_{t+1\leq t_{1}\leq t_{2}\leq t_{3}\leq t+T}}
r_{t_{1}}r_{t_{2}}r_{t_{3}}\right) \\
&  =%
{\displaystyle\sum\limits_{u=1}^{T}}
E_{t}r_{t+u}^{3}+%
{\displaystyle\sum\limits_{t+1\leq t_{1}<t_{2}\leq t+T}}
3E_{t}\left(  r_{t_{1}}r_{t_{2}}^{2}\right) \\
&  =%
{\displaystyle\sum\limits_{u=1}^{T}}
E_{t}\left(  r_{t+u}^{3}\right)  +3%
{\displaystyle\sum\limits_{t+1\leq t_{1}<t_{2}\leq t+T}}
\zeta^{t_{2}-t_{1}-1}\left(  \alpha E_{t}\left(  r_{t_{1}}^{3}\right)
+\alpha_{d}E_{t}\left(  r_{t_{1}}^{3}1_{\left\{  r_{t_{1}}\leq0\right\}
}\right)  \right) \\
&  =%
{\displaystyle\sum\limits_{u=1}^{T}}
E_{t}\left(  r_{t+u}^{3}\right)  +3%
{\displaystyle\sum\limits_{u=1}^{T}}
\frac{1-\zeta^{T-u}}{1-\zeta}\left(  \alpha E_{t}\left(  r_{t+u}^{3}\right)
+\alpha_{d}E_{t}\left(  r_{t+u}^{3}1_{\left\{  r_{t+u}\leq0\right\}  }\right)
\right) \\
&  =s_{\varepsilon}%
{\displaystyle\sum\limits_{u=1}^{T}}
E_{t}\left(  \sigma_{t+u}^{3}\right)  +\left(  \alpha s_{\varepsilon}%
+\alpha_{d}s_{\varepsilon}^{d}\right)
{\displaystyle\sum\limits_{u=1}^{T}}
\frac{1-\zeta^{T-u}}{1-\zeta}E_{t}\left(  \sigma_{t+u}^{3}\right)
\end{align*}
Using the law of iterated expectations%
\[
E\left(
{\displaystyle\sum\limits_{u=t+1}^{t+T}}
r_{u}\right)  ^{3}=E\left(  E_{t}\left(
{\displaystyle\sum\limits_{u=t+1}^{t+T}}
r_{u}\right)  ^{3}\right)  =\left[  Ts_{\varepsilon}+3\left(  \alpha
s_{\varepsilon}+\alpha_{d}s_{\varepsilon}^{d}\right)  \frac{T(1-\zeta
)-1+\zeta^{T}}{(1-\zeta)^{2}}\right]  E\left(  \sigma_{t}\right)  ^{3}%
\]
$S_{T}$ is then computed using the simple formula for the unconditional
variance $E(R_{t,t+T}^{2})=\sigma^{2}$.
\end{proof}
\end{proposition}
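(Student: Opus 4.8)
The plan is to compute the conditional third moment $E_t R_{t,t+T}^3$ by expanding the cube of the aggregated return and discarding the terms that are annihilated by the martingale-difference property $E_{s-1}(r_s)=0$. Writing $R_{t,t+T}=\sum_{u=t+1}^{t+T} r_u$ and multiplying out, with all index sums taken over $t+1\le t_i\le t+T$,
\[
E_t R_{t,t+T}^3 = \sum_{u=1}^{T} E_t(r_{t+u}^3) + 3\sum_{t_1<t_2}\bigl[E_t(r_{t_1}^2 r_{t_2}) + E_t(r_{t_1}r_{t_2}^2)\bigr] + 6\sum_{t_1<t_2<t_3}E_t(r_{t_1}r_{t_2}r_{t_3}).
\]
Conditioning each term on $\digamma_{s-1}$, where $s$ is its strictly largest time index, and using $E_{s-1}(r_s)=0$ shows that $E_t(r_{t_1}^2 r_{t_2})=0$ for $t_1<t_2$ and that the triple sum vanishes as well. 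The diagonal sum equals $s_\varepsilon\sum_{u=1}^T E_t(\sigma_{t+u}^3)$, since $\sigma_{t+u}$ is $\digamma_{t+u-1}$-measurable and $\varepsilon_{t+u}$ is independent of it, so $E_t(r_{t+u}^3)=E(\varepsilon^3)E_t(\sigma_{t+u}^3)$. Thus everything reduces to evaluating $\sum_{t_1<t_2}E_t(r_{t_1}r_{t_2}^2)=\sum_{t_1<t_2}cov_t(r_{t_1},r_{t_2}^2)$, the last equality using $E_t(r_{t_1})=0$.

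Next I would invoke Lemma 7 with time origin $t_1$. The recursion $cov_t(r_{t_1},r_{t_2}^2)=\zeta\, cov_t(r_{t_1},r_{t_2-1}^2)$ for $t_2>t_1+1$, iterated down to the base case $cov_t(r_{t_1},r_{t_1+1}^2)=\alpha E_t(r_{t_1}^3)+\alpha_d E_t(r_{t_1}^3 1_{\{r_{t_1}\le 0\}})$, gives $cov_t(r_{t_1},r_{t_2}^2)=\zeta^{\,t_2-t_1-1}(\alpha s_\varepsilon+\alpha_d s_\varepsilon^d)E_t(\sigma_{t_1}^3)$, once again factoring $\varepsilon_{t_1}$ out of the $\digamma_{t_1-1}$-measurable $\sigma_{t_1}$ and noting $1_{\{r_{t_1}\le 0\}}=1_{\{\varepsilon_{t_1}\le 0\}}$. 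Setting $t_1=t+u$ and summing the inner geometric series, $\sum_{t_2=t_1+1}^{t+T}\zeta^{\,t_2-t_1-1}=\frac{1-\zeta^{T-u}}{1-\zeta}$, then assembles the claimed representation
\[
E_t R_{t,t+T}^3 = s_\varepsilon\sum_{u=1}^{T}E_t(\sigma_{t+u}^3)+3(\alpha s_\varepsilon+\alpha_d s_\varepsilon^d)\sum_{u=1}^{T}\frac{1-\zeta^{T-u}}{1-\zeta}E_t(\sigma_{t+u}^3).
\]

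For the unconditional skewness I would take expectations of this identity and use stationarity of the TARCH(1,1) process, which is legitimate because $0\le\zeta<1$ and $E\sigma_t^3$ is assumed finite, to replace every $E_t(\sigma_{t+u}^3)$ by the constant $E\sigma_t^3$. The double sum then factors and telescopes via $\sum_{u=1}^{T}\frac{1-\zeta^{T-u}}{1-\zeta}=\frac{1}{1-\zeta}\Bigl(T-\frac{1-\zeta^T}{1-\zeta}\Bigr)=\frac{T(1-\zeta)-1+\zeta^T}{(1-\zeta)^2}$, producing the numerator. Dividing by $\bigl(E R_{t,t+T}^2\bigr)^{3/2}=(T\sigma^2)^{3/2}$ — the unconditional variance being $T$ times the per-period unconditional variance $\sigma^2$ because the $r_u$ are serially uncorrelated — and pulling out the factor $E(\sigma_t/\sigma)^3$ yields the stated formula for $S_T$.

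The genuinely non-trivial ingredient is Lemma 7 itself: establishing that each additional lag between $t_1$ and $t_2$ contributes exactly the persistence factor $\zeta=\beta+\alpha+\alpha_d v_\varepsilon^d$ requires peeling one layer of the TARCH recursion at a time and conditioning on the right past information set so that the $\sigma^2$ terms drop out of the covariance and the truncated second moment $v_\varepsilon^d$ is extracted from the asymmetric term. Once that recursion and its base case are available, the rest of the argument is elementary manipulation of geometric series; the only other care needed is the routine one of invoking $\zeta<1$ and finiteness of $E\sigma_t^3$ to ensure that all sums and unconditional moments are well defined.
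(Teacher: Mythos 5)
Your proposal is correct and follows essentially the same route as the paper: expand the cube, kill the terms whose largest-index return appears linearly via the martingale-difference property, reduce the surviving cross terms $E_t(r_{t_1}r_{t_2}^2)$ to the base case through the Lemma 7 recursion with persistence factor $\zeta$, and sum the resulting geometric series. Your treatment is in fact slightly more explicit than the paper's at two points — spelling out why the $E_t(r_{t_1}^2 r_{t_2})$ and triple-product terms vanish, and using $E(R_{t,t+T}^2)=T\sigma^2$ (the paper's proof misprints this as $\sigma^2$, though its final formula is consistent with your version).
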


To derive unconditional kurtosis we define the following unconditional autocorrelations%

\begin{align*}
\gamma_{n}  &  =\gamma_{-n}=corr(r_{t-n}^{2},r_{t}^{2})\\
\varphi_{n}  &  =corr(r_{t-n},r_{t}^{2})\text{ for }n\geq1\\
\psi_{i,j}  &  \equiv E\left(  r_{t-i}r_{t-j}r_{t}^{2}\right)  \text{ for
}1\leq j<i
\end{align*}

\begin{lemma}
$\gamma_{n},$ $\varphi_{n}$ and $\psi_{i,j}$ decay exponentially as $n$ and
$i-j$ increase%
\begin{align*}
\gamma_{n}  &  =\zeta\gamma_{n-1}=\zeta^{n-1}\gamma_{1}\text{ for }n\geq1\\
\varphi_{n}  &  =\zeta\varphi_{n-1}=\zeta^{n-1}\varphi_{1}\text{ for }n\geq1\\
\psi_{i,j}  &  =\zeta\psi_{i-1,j-1}=\zeta^{j-1}\psi_{i-j+1,1}\text{ for }1\leq
j<i
\end{align*}
where $\gamma_{1},$ $\varphi_{1}$ and $\psi_{k,1}$ are given by%
\begin{align*}
\gamma_{1}  &  =\alpha\left(  k_{r}-1\right)  +\alpha_{d}\left(  k_{r}%
^{d}-v_{r}^{d}\right)  +\beta k_{r}/k_{\varepsilon}\\
\varphi_{1}  &  =\alpha s_{r}+\alpha_{d}s_{r}^{d}\\
\psi_{k,1}  &  =\alpha E\left(  r_{t-k+1}r_{t}^{3}\right)  +\alpha_{d}E\left(
r_{t-k+1}r_{t}^{3}1_{\left\{  r_{t}\leq0\right\}  }\right)
\end{align*}
with $v_{\varepsilon}^{d}=\frac{E\left(  r_{t}^{2}1_{\left\{  r_{t}%
\leq0\right\}  }\right)  }{Er_{t}^{2}},$ $s_{r}=\frac{E\left(  r_{t}%
^{3}\right)  }{\left(  Er_{t}^{2}\right)  ^{3/2}}$, $s_{r}^{d}=\frac{E\left(
r_{t}^{3}1_{\left\{  r_{T}<0\right\}  }\right)  }{\left(  Er_{t}^{2}\right)
^{3/2}},$ $k_{r}=\frac{E\left(  r_{t}^{4}\right)  }{\left(  Er_{t}^{2}\right)
^{2}}$ and $k_{r}^{d}=\frac{E\left(  r_{t}^{4}1_{\left\{  r_{t}\leq0\right\}
}\right)  }{\left(  Er_{t}^{2}\right)  ^{2}}.$
\end{lemma}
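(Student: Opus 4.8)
The plan is to reduce every claim to a single fact: peeling one step off the volatility recursion. Write it as $\sigma_{t+1}^{2}=(1-\zeta)\sigma^{2}+a_{t}\sigma_{t}^{2}$ with $a_{t}\equiv\beta+\alpha\varepsilon_{t}^{2}+\alpha_{d}\varepsilon_{t}^{2}1_{\{\varepsilon_{t}\leq0\}}$. What makes the argument go through is that $\{a_{t}\}$ is iid, $a_{t}$ is independent of $\digamma_{t-1}$, and $E(a_{t})=\zeta$ by (\ref{tarch11_persistence}); I also use $r_{t}^{2}=\sigma_{t}^{2}\varepsilon_{t}^{2}$, $1_{\{r_{t}\leq0\}}=1_{\{\varepsilon_{t}\leq0\}}$, independence of $\sigma_{t}$ from $\varepsilon_{t}$, and stationarity of the process (which holds under the finite-moment hypotheses) to shift time indices.

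First I would prove the one-step recursions at lags $\ge2$. For $\gamma_{n}$ with $n\ge2$: using $E_{t-1}\varepsilon_{t}^{2}=1$, pull $\varepsilon_{t}^{2}$ out of $r_{t}^{2}$ to get $cov(r_{t-n}^{2},r_{t}^{2})=cov(r_{t-n}^{2},\sigma_{t}^{2})$; substitute $\sigma_{t}^{2}=(1-\zeta)\sigma^{2}+a_{t-1}\sigma_{t-1}^{2}$ (the constant drops from the covariance); since $r_{t-n}^{2}$ and $\sigma_{t-1}^{2}$ are $\digamma_{t-2}$-measurable while $a_{t-1}\perp\digamma_{t-2}$, factor out $E(a_{t-1})=\zeta$; finally reabsorb $\sigma_{t-1}^{2}$ into $r_{t-1}^{2}$ (the same pull-out, legitimate because $n\ge2$) and shift by stationarity to obtain $\zeta\,cov(r_{t-n}^{2},r_{t-1}^{2})$. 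The normalization is common to $\gamma_{n}$ and $\gamma_{n-1}$, so $\gamma_{n}=\zeta\gamma_{n-1}$; iterating gives $\gamma_{n}=\zeta^{n-1}\gamma_{1}$ for $n\ge1$ (vacuous at $n=1$). The identical steps with $r_{t-n}$ in place of $r_{t-n}^{2}$ (the constant term now vanishes because $Er_{t-n}=0$) give $\varphi_{n}=\zeta\varphi_{n-1}$, and with the product $r_{t-i}r_{t-j}$ (the constant term vanishing since $E(r_{t-i}r_{t-j})=0$ for $i>j$, by conditioning on $\digamma_{t-j-1}$) give $\psi_{i,j}=\zeta\psi_{i-1,j-1}$ for $j\ge2$, hence $\psi_{i,j}=\zeta^{j-1}\psi_{i-j+1,1}$.

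Next I would evaluate the base cases, where the one-step peel no longer decouples because $a_{t-1}$ and the lag-one factor both carry the innovation $\varepsilon_{t-1}$. The remedy is to split off $\sigma_{t-1}$, which is independent of $\varepsilon_{t-1}$: for $\gamma_{1}$ this gives $cov(r_{t-1}^{2},r_{t}^{2})=E(\sigma_{t-1}^{4})\,E(\varepsilon_{t-1}^{2}a_{t-1})-\zeta\sigma^{4}$ with $E(\varepsilon^{2}a)=\beta+\alpha k_{\varepsilon}+\alpha_{d}k_{\varepsilon}^{d}$; for $\varphi_{1}$, $cov(r_{t-1},r_{t}^{2})=E(\sigma_{t-1}^{3})\,E(\varepsilon_{t-1}a_{t-1})$ with $E(\varepsilon a)=\alpha s_{\varepsilon}+\alpha_{d}s_{\varepsilon}^{d}$; and for $\psi_{k,1}$ with $k\ge2$, $\psi_{k,1}=E(r_{t-k}\sigma_{t-1}^{3})\,E(\varepsilon_{t-1}a_{t-1})$. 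It then remains to translate the $\sigma$-moments back into the $r$-moments $k_{r},k_{r}^{d},v_{r}^{d},s_{r},s_{r}^{d}$ using $\sigma_{t}\perp\varepsilon_{t}$ and stationarity (e.g. $E\sigma_{t}^{4}=k_{r}\sigma^{4}/k_{\varepsilon}$; $E\sigma_{t}^{3}\,s_{\varepsilon}=Er_{t}^{3}=s_{r}\sigma^{3}$, and likewise for the truncated third moments; $E(r_{t-k}\sigma_{t-1}^{3})\,s_{\varepsilon}=E(r_{t-k+1}r_{t}^{3})$), apply the relevant normalization, and simplify with $\zeta=\beta+\alpha+\alpha_{d}v_{r}^{d}$ to reach the stated formulas.

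The decay part above is essentially measurability bookkeeping and is not the hard part. The main obstacle is the base-case computation: one must keep careful track of which single innovation $\varepsilon_{t-1}$ is shared between the variance update and the lag-one factor, pin down the normalizing constants exactly, and push the somewhat tedious conversions between truncated moments of $\varepsilon$ and of $r$ all the way through to the stated closed forms. (A minor point: the appendix writes $\rho$ for what the main text and this lemma call the persistence $\zeta$; I treat them as the same quantity.)
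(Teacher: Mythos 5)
The paper states this lemma without proof, so there is nothing to compare line by line; but your argument is exactly the unconditional analogue of the conditional recursion the paper does prove (the earlier lemma establishing $cov_{t-1}(r_t^k,r_{t+u}^2)=\zeta\,cov_{t-1}(r_t^k,r_{t+u-1}^2)$ for $u>1$), and it is correct. Writing $\sigma_{t}^{2}=(1-\zeta)\sigma^{2}+a_{t-1}\sigma_{t-1}^{2}$ with $a_{t-1}$ independent of $\digamma_{t-2}$ and $E(a_{t-1})=\zeta$, peeling one step for lags $\ge 2$, and isolating the shared innovation $\varepsilon_{t-1}$ via $\sigma_{t-1}\perp\varepsilon_{t-1}$ in the lag-one base cases is precisely the right mechanism; your base-case expressions $E(\varepsilon^{2}a)=\beta+\alpha k_{\varepsilon}+\alpha_{d}k_{\varepsilon}^{d}$ and $E(\varepsilon a)=\alpha s_{\varepsilon}+\alpha_{d}s_{\varepsilon}^{d}$, and the moment conversions $E\sigma_{t}^{4}=k_{r}\sigma^{4}/k_{\varepsilon}$, $k_{\varepsilon}^{d}k_{r}/k_{\varepsilon}=k_{r}^{d}$, $E\sigma_t^3 s_\varepsilon = s_r\sigma^3$, all check out. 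You are also right that the recursions are only meaningful for $n\ge2$ (resp.\ $j\ge2$), with $n=1$ vacuous, and that the appendix's $\rho$ is the main text's $\zeta$.

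One caution where you write ``apply the relevant normalization \ldots to reach the stated formulas'': your (correct) computation gives
\begin{equation*}
\frac{cov\left(r_{t-1}^{2},r_{t}^{2}\right)}{\left(Er_{t}^{2}\right)^{2}}
=\alpha\left(k_{r}-1\right)+\alpha_{d}\left(k_{r}^{d}-v_{r}^{d}\right)+\beta\left(\frac{k_{r}}{k_{\varepsilon}}-1\right),
\end{equation*}
which differs by $-\beta$ from the lemma's $\beta k_{r}/k_{\varepsilon}$ term; moreover this quantity (and likewise $\alpha s_{r}+\alpha_{d}s_{r}^{d}$ for $\varphi_{1}$) is the cross-moment normalized by powers of $Er_{t}^{2}$, not the correlation $corr(r_{t-1}^{2},r_{t}^{2})$ that the definition of $\gamma_{n}$ announces --- the normalization actually used later, in the proof of the aggregated-kurtosis proposition, is indeed by $(Er_{t}^{2})^{2}$. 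So the discrepancies are in the paper's statement, not in your derivation; just do not assert that the algebra lands exactly on the printed formula for $\gamma_{1}$, because it does not.
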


\begin{proposition}
If%
\begin{align*}
\zeta &  \equiv E\left(  \beta+\alpha\varepsilon_{t}^{2}+\alpha_{D}%
\varepsilon_{t}^{2}1_{\left\{  \varepsilon_{t}\leq0\right\}  }\right)
=\beta+\alpha+\alpha_{D}v_{\varepsilon}^{d}<1\\
\xi &  \equiv E\left(  \beta+\alpha\varepsilon_{t}^{2}+\alpha_{D}%
\varepsilon_{t}^{2}1_{\left\{  \varepsilon_{t}\leq0\right\}  }\right)
^{2}=\beta^{2}+\alpha^{2}k_{\varepsilon}+\alpha_{D}^{2}k_{\varepsilon}%
^{d}+2\alpha\beta+2\alpha_{D}\beta v_{\varepsilon}^{d}+2\alpha\alpha
_{D}k_{\varepsilon}^{d}<1
\end{align*}
then unconditional kurtosis of $r_{t}$, $K_{1},$ is finite and%
\[
K_{1}\equiv\frac{Er_{t}^{4}}{\left(  Er_{t}^{2}\right)  ^{2}}=k_{\varepsilon
}\frac{1-\zeta^{2}}{1-\xi}%
\]

\begin{proof}
If the 4th moment of $r_{t}$ exists then the following equation must hold%
\begin{align*}
Er_{t}^{4}  &  =E\left(  \varepsilon_{t}^{4}\right)  E\left(  \sigma_{t}%
^{4}\right) \\
&  =k_{\varepsilon}E\left(  \left(  1-\zeta\right)  \sigma^{2}+\alpha
r_{t-1}^{2}+\alpha_{d}r_{t-1}^{2}1_{\left\{  r_{t-1}\leq0\right\}  }%
+\beta\sigma_{t-1}^{2}\right)  ^{2}\\
&  =k_{\varepsilon}(\left(  1-\zeta\right)  ^{2}\sigma^{4}+2\left(
1-\zeta\right)  \sigma^{2}E\left(  \alpha r_{t-1}^{2}+\alpha_{d}r_{t-1}%
^{2}1_{\left\{  r_{t-1}\leq0\right\}  }+\beta\sigma_{t-1}^{2}\right) \\
&  +\left(  \alpha r_{t-1}^{2}+\alpha_{d}r_{t-1}^{2}1_{\left\{  r_{t-1}%
\leq0\right\}  }+\beta\sigma_{t-1}^{2}\right)  ^{2})\\
&  =k_{\varepsilon}\left(  \left(  1-\zeta\right)  ^{2}\sigma^{4}+2\left(
1-\zeta\right)  \rho\sigma^{4}+\xi E\sigma_{t-1}^{4}\right)
\end{align*}
Therefore $Er_{t}^{4}$ necessarily solves
\[
Er_{t}^{4}=k_{\varepsilon}\left(  1-\zeta^{2}\right)  \sigma^{4}+\xi
Er_{t}^{4}%
\]

\end{proof}
\end{proposition}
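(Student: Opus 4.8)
The plan is to reduce the whole statement to a single linear recursion for the fourth moment of the conditional variance, $E\sigma_t^4$, and then invert it using stationarity. Two elementary observations drive everything. First, $\sigma_t$ is a function of $\varepsilon_{t-1},\varepsilon_{t-2},\dots$ only, hence independent of $\varepsilon_t$, so $Er_t^4=E(\varepsilon_t^4)\,E(\sigma_t^4)=k_\varepsilon E\sigma_t^4$ and $Er_t^2=E\sigma_t^2=\sigma^2=\omega/(1-\zeta)$. Second, because $\sigma_{t-1}>0$ almost surely, $\{r_{t-1}\le 0\}=\{\varepsilon_{t-1}\le 0\}$, so $\sigma_{t-1}^2$ is independent of the indicator $1_{\{r_{t-1}\le 0\}}$ as well. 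Combining these with $Er_{t-1}^2=\sigma^2$ gives the building blocks $E(r_{t-1}^2 1_{\{r_{t-1}\le 0\}})=v_\varepsilon^d\sigma^2$, $E(r_{t-1}^4)=k_\varepsilon E\sigma_{t-1}^4$, $E(r_{t-1}^4 1_{\{r_{t-1}\le 0\}})=k_\varepsilon^d E\sigma_{t-1}^4$, $E(r_{t-1}^2\sigma_{t-1}^2)=E\sigma_{t-1}^4$ and $E(r_{t-1}^2 1_{\{r_{t-1}\le 0\}}\sigma_{t-1}^2)=v_\varepsilon^d E\sigma_{t-1}^4$.

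Next I would square the TARCH recursion written as $\sigma_t^2=(1-\zeta)\sigma^2+\pi_{t-1}$ with $\pi_{t-1}=\alpha r_{t-1}^2+\alpha_d r_{t-1}^2 1_{\{r_{t-1}\le 0\}}+\beta\sigma_{t-1}^2$, and take expectations term by term. The constant piece contributes $(1-\zeta)^2\sigma^4$; the cross piece contributes $2(1-\zeta)\sigma^2\,E\pi_{t-1}=2(1-\zeta)\sigma^2(\alpha+\alpha_d v_\varepsilon^d+\beta)\sigma^2=2\zeta(1-\zeta)\sigma^4$, so the two together give $(1-\zeta^2)\sigma^4$; and $E\pi_{t-1}^2$ collapses, using the building blocks above, to $\xi\, E\sigma_{t-1}^4$ — this is exactly the computation that defines $\xi$ in (\ref{tarch11_xi}), with the six second-order terms $\alpha^2 k_\varepsilon$, $\alpha_d^2 k_\varepsilon^d$, $\beta^2$, $2\alpha\alpha_d k_\varepsilon^d$, $2\alpha\beta$, $2\alpha_d\beta v_\varepsilon^d$ appearing one by one. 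This produces the recursion $E\sigma_t^4=(1-\zeta^2)\sigma^4+\xi\, E\sigma_{t-1}^4$. Under stationarity $E\sigma_t^4=E\sigma_{t-1}^4$, so $(1-\xi)E\sigma_t^4=(1-\zeta^2)\sigma^4$, whence $E\sigma_t^4=(1-\zeta^2)\sigma^4/(1-\xi)$ and $K_1=Er_t^4/(Er_t^2)^2=k_\varepsilon E\sigma_t^4/\sigma^4=k_\varepsilon(1-\zeta^2)/(1-\xi)$.

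The main obstacle is not the bookkeeping but the finiteness claim: the manipulation above is only legitimate once one knows $E\sigma_t^4<\infty$, since otherwise the fixed-point equation is vacuous. The hypothesis $\xi<1$ is exactly what closes this gap. I would make it rigorous by starting the process from a deterministic $\sigma_0^2$, noting that the same computation gives the inequality $E\sigma_t^4\le(1-\zeta^2)\sigma^4+\xi\, E\sigma_{t-1}^4$, and observing that the dominating scalar iteration $a_t=(1-\zeta^2)\sigma^4+\xi a_{t-1}$ with $0\le\xi<1$ stays bounded and converges to $(1-\zeta^2)\sigma^4/(1-\xi)$; passing to the stationary regime then delivers both the finiteness and the exact value. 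Alternatively one could simply invoke the moment-existence criteria for GARCH-type recursions of He and Ter\"{a}svirta \cite{HeTres}. A minor additional care is needed to confirm that $\{r_{t-1}\le 0\}=\{\varepsilon_{t-1}\le 0\}$, and hence that the truncated moments $v_\varepsilon^d$, $k_\varepsilon^d$ of the innovation — rather than of $r_{t-1}$ — are the quantities that enter $\xi$.
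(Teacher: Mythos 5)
Your proposal is correct and follows essentially the same route as the paper: square the TARCH recursion for $\sigma_t^2$, take expectations using the independence of $\sigma_{t-1}$ from $\varepsilon_{t-1}$ (and the identity $\{r_{t-1}\le 0\}=\{\varepsilon_{t-1}\le 0\}$), recognize the quadratic term as $\xi\,E\sigma_{t-1}^4$, and solve the resulting fixed-point equation under stationarity. The one place you go beyond the paper is the finiteness claim: the paper's proof only derives the value of $Er_t^4$ conditional on its existence, whereas your dominated scalar iteration (or the appeal to He and Ter\"{a}svirta) actually closes that gap, so your argument is, if anything, more complete.
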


\begin{proposition}
If the distribution of $\varepsilon_{t}$ is symmetric and $\alpha_{d}=0$ then
unconditional kurtosis of $R_{T},$ if exists, is given by the following
formula:
\begin{align}
K_{T}  &  =3+\frac{1}{T}(K_{1}-3)+6\frac{\gamma_{1}}{T^{2}}\frac
{T(1-\zeta)-1+\zeta^{T}}{(1-\zeta)^{2}}\text{for }T>1\\
K_{1}  &  =k_{\varepsilon}\frac{1-\zeta^{2}}{1-\xi}%
\end{align}
where $k_{\varepsilon}$ is unconditional kurtosis of $\varepsilon_{t}$ and
\begin{align*}
\xi &  \equiv E\left(  \beta+\alpha\varepsilon_{t}^{2}+\alpha_{d}%
\varepsilon_{t}^{2}1_{\left\{  \varepsilon_{t}\leq0\right\}  }\right)
^{2}=\beta^{2}+\alpha^{2}k_{\varepsilon}+\alpha_{d}^{2}k_{\varepsilon}%
^{d}+2\alpha\beta+2\alpha_{d}\beta v_{\varepsilon}^{d}+2\alpha\alpha
_{d}k_{\varepsilon}^{d}.\\
\gamma_{1}  &  \equiv corr\left(  r_{t-1}^{2},r_{t}^{2}\right)  =\alpha\left(
k_{r}-1\right)  +\alpha_{d}\left(  k_{r}^{d}-v_{r}^{d}\right)  +\beta
k_{r}/k_{\varepsilon}%
\end{align*}

\begin{proof}%
\begin{align*}
E\left(
{\displaystyle\sum\limits_{u=t+1}^{t+T}}
r_{u}\right)  ^{4}  &  =%
{\displaystyle\sum\limits_{u=1}^{T}}
E\left(  r_{t+u}^{4}\right)  +6%
{\displaystyle\sum\limits_{t+1\leq t_{1}<t_{2}\leq t+T}}
E\left(  r_{t_{1}}^{2}r_{t_{2}}^{2}\right) \\
&  =%
{\displaystyle\sum\limits_{u=1}^{T}}
E\left(  r_{t+u}^{4}\right)  +6%
{\displaystyle\sum\limits_{t+1\leq t_{1}<t_{2}\leq t+T}}
\left[  cov\left(  r_{t_{1}}^{2},r_{t_{2}}^{2}\right)  +E\left(  r_{t_{1}}%
^{2}\right)  E\left(  r_{t_{2}}^{2}\right)  \right] \\
&  =TE\left(  r_{t}^{4}\right)  +6\frac{T(T-1)}{2}E\left(  r_{t}^{2}\right)
^{2}+6cov\left(  r_{t-1}^{2},r_{t}^{2}\right)
{\displaystyle\sum\limits_{t+1\leq t_{1}<t_{2}\leq t+T}}
\zeta^{t_{2}-t_{1}-1}\\
&  =TE\left(  r_{t}^{4}\right)  +6\frac{T(T-1)}{2}E\left(  r_{t}^{2}\right)
^{2}+6cov\left(  r_{t-1}^{2},r_{t}^{2}\right)  \frac{T(1-\zeta)-1+\zeta^{T}%
}{(1-\zeta)^{2}}%
\end{align*}
substituting the derived 4th moment into the definition of the kurtosis
$K_{T}=E\left(
{\displaystyle\sum\limits_{u=t+1}^{t+T}}
r_{u}\right)  ^{4}/E\left(  r_{t}^{2}\right)  ^{2}$ completes the proof.
\end{proof}
\end{proposition}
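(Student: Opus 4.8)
The plan is to compute $E\left(\sum_{u=t+1}^{t+T} r_u\right)^{4}$ by straight expansion of the fourth power and then divide by the square of the unconditional variance, which under the maintained assumptions is simply $E R_{t,t+T}^{2}=T\sigma^{2}$. Expanding the multinomial, the monomials split into four types: the diagonal terms $r_{t+u}^{4}$; the mixed-square terms $r_{t_1}^{2}r_{t_2}^{2}$ with $t_1\neq t_2$; and the remaining terms $r_{t_1}^{3}r_{t_2}$, $r_{t_1}^{2}r_{t_2}r_{t_3}$, $r_{t_1}r_{t_2}r_{t_3}r_{t_4}$ in which some coordinate occurs to an odd power. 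The first move is to argue that the last group vanishes in expectation: when $\varepsilon_t$ is symmetric and $\alpha_d=0$, the recursion for $\sigma_t^{2}$ involves past innovations only through their squares, so the joint law of $(\ldots,r_{t-1},r_t,r_{t+1},\ldots)$ is invariant under the global sign flip $\varepsilon_s\mapsto-\varepsilon_s$, and hence any monomial with odd total degree in one of the $r$'s has zero mean. Combined with the standard multinomial count (coefficient $6$ for the unordered mixed-square pairs) this gives
\[
E R_{t,t+T}^{4}=T\,E r_t^{4}+6\sum_{t+1\le t_1<t_2\le t+T} E\!\left(r_{t_1}^{2}r_{t_2}^{2}\right).
\]

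Next I would reduce the double sum. By stationarity and the definition of $\gamma_n$ we have $E(r_{t_1}^{2}r_{t_2}^{2})=\mathrm{cov}(r_{t_1}^{2},r_{t_2}^{2})+\sigma^{4}=\sigma^{4}\bigl(1+\gamma_{t_2-t_1}\bigr)$. The constant $1$ contributes $6\binom{T}{2}\sigma^{4}=3T(T-1)\sigma^{4}$. For the covariance part I invoke Lemma~9 (the exponential decay $\gamma_n=\zeta^{\,n-1}\gamma_1$, with $\gamma_1=\alpha(k_r-1)+\beta k_r/k_\varepsilon$ once $\alpha_d=0$), which turns the rest into $6\gamma_1\sigma^{4}\sum_{t+1\le t_1<t_2\le t+T}\zeta^{\,t_2-t_1-1}$. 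Reindexing by the gap $k=t_2-t_1\in\{1,\dots,T-1\}$, which is realized by $T-k$ pairs, the inner sum becomes $\sum_{k=1}^{T-1}(T-k)\zeta^{\,k-1}$, a routine arithmetic-geometric sum equal to $\dfrac{T(1-\zeta)-1+\zeta^{T}}{(1-\zeta)^{2}}$.

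Finally I would substitute $E r_t^{4}=K_1\sigma^{4}$ with $K_1=k_\varepsilon(1-\zeta^{2})/(1-\xi)$ from Proposition~10, collect
\[
E R_{t,t+T}^{4}=\left[\,T K_1+3T(T-1)+6\gamma_1\frac{T(1-\zeta)-1+\zeta^{T}}{(1-\zeta)^{2}}\,\right]\sigma^{4},
\]
and divide by $\bigl(E R_{t,t+T}^{2}\bigr)^{2}=T^{2}\sigma^{4}$. Writing $3T(T-1)/T^{2}=3-3/T$ and grouping the $1/T$ terms produces exactly $K_T=3+\frac{1}{T}(K_1-3)+6\frac{\gamma_1}{T^{2}}\cdot\frac{T(1-\zeta)-1+\zeta^{T}}{(1-\zeta)^{2}}$.

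I expect the main obstacle to be bookkeeping rather than any deep idea: making the sign-flip symmetry argument airtight (noting in particular that $\sigma_t$ is measurable with respect to the \emph{squares} of past innovations, so the argument goes through even though $\sigma_t$ is not deterministic), pinning down the multinomial coefficient $6$ for the unordered mixed-square pairs, and carrying out the arithmetic-geometric double sum without slips. The ``if exists'' hypothesis is also doing real work: finiteness of $K_T$ presupposes $\xi<1$, which is what makes $E r_t^{4}$ (and hence $K_1$) finite, and that is precisely the ingredient borrowed from Proposition~10.
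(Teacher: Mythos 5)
Your proof is correct and follows essentially the same route as the paper's: drop the odd-power cross terms, reduce the double sum over $E(r_{t_1}^2 r_{t_2}^2)$ via the geometric decay of the squared-return autocovariance, evaluate the arithmetic-geometric sum, and divide by $(ER_{t,t+T}^2)^2 = T^2\sigma^4$. You are in fact slightly more careful than the paper in two places — you make the sign-flip symmetry argument for the vanishing cross terms explicit, and you write the correct denominator $T^2\sigma^4$ where the paper's displayed definition of $K_T$ has a typo ($E(r_t^2)^2$ instead of $(ER_{t,t+T}^2)^2$) — and your reading of $\gamma_1$ as $\mathrm{cov}(r_{t-1}^2,r_t^2)/\sigma^4$ is the one consistent with the stated formula.
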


\section{\label{appendix_SP500}Estimation Results for SP500}

In this appendix consider the estimation results of several TARCH(1,1)
specifications for SP500 weekly returns, to provide empirical context for the
rest of the paper. We obtained the daily levels of SP500 from CRSP database.
The total number of observations is 10,699 and covers the period from
07/02/1962 till 12/31/2004. We constructed weekly log returns and estimated
the parameters of TARCH and GARCH\ models with Gaussian and Student-t shocks
for 2 samples - full and post-1990.

Tables 1 and 2 shows estimated parameters and various data statistics. Note
that the Student-t distribution has an additional parameter, degrees of
freedom $\nu$, that adjusts the tails of the error distribution. Since the
Gaussian distribution is nested within the Student-t as a limit of large
degrees of freedom, and since the estimates of the full unconstrained model
result in a relatively small and statistically significant value of the
degrees of freedom, we conclude that the data points toward the fat-tailed
return shock distribution.

On the other hand, the asymmetric TARCH model is nested within the symmetric
GARCH in the limiting case $\alpha_{d}=0$. The estimated asymmetric
coefficient $\alpha_{d}$ in the TARCH\ model is not only non-zero, but
significantly higher than the symmetric coefficient $\alpha$ for both complete
and post 1990 samples, both daily and weekly frequencies and Gaussian and
Student-t shock distributions. Thus, we conclude that the asymmetric
volatility is prominently present in the data. The best fit model among those
considered is the TARCH(1,1) with Student-t distribution of return
innovations. The additional parameters of this model are statistically significant.

To make sure that asymmetry in volatility is not a result of several extreme
negative returns like 1987 crash we provide data statistics and re-estimated
parameters of TARCH\ models for trimmed full and post 1990 samples. The
trimming is done by cutting excess volatility in the most extreme 0.1\%
observations of both positive and negative return.

\begin{center}%
\begin{tabular}
[c]{cc|cccc|ccccc}%
\multicolumn{11}{c}{Table 1: SP500 moments.}\\\hline\hline
& Sample period & \multicolumn{4}{|c|}{Daily} & \multicolumn{4}{|c}{Weekly} &
\\\hline
&  & $s_{r}$ & $s_{r}^{d}$ & $k_{r}$ & $v_{r}^{d}$ & $s_{r}$ & $s_{r}^{d}$ &
$k_{r}$ & $v_{r}^{d}$ & \\
& 1962-2004 & {\small -1.40} & {\small -2.43} & {\small 39.83} & {\small 0.53}
& {\small -0.55} & {\small -1.35} & {\small 7.01} & {\small 0.55} & \\
& 1990-2004 & {\small -0.11} & {\small -1.14} & {\small 6.67} & {\small 0.51}
& {\small -0.64} & {\small -1.36} & {\small 6.10} & {\small 0.56} &
\\\hline\hline
\multicolumn{11}{c}{{\small SP500 moments(After trimming 0.1\% of extreme
positive and negative returns )}}\\\hline
& 1962-2004 & {\small 0.05} & {\small -1.03} & {\small 5.95} & {\small 0.50} &
{\small -0.39} & {\small -1.18} & {\small 5.26} & {\small 0.54} & \\
& 1990-2004 & {\small 0.04} & {\small -1.01} & {\small 5.56} & {\small 0.50} &
{\small -0.50} & {\small -1.24} & {\small 5.22} & {\small 0.55} &
\\\hline\hline
\end{tabular}

\bigskip

Table 2.
\end{center}

{\small Estimated parameters of GARCH(1,1)/TARCH(1,1) with Gaussian/Student-t
shocks on weekly SP500 returns. The total number of return observations is 756
for post-1990 sample and 2,139 for the full sample starting in 1962.. Number
below each parameter estimate in parenthesis is an asymptotic standard
deviation, LogL\ is corresponding loglikelihood value. GARCH\ parameters
correspond to the volatility specification: }$\sigma_{t}^{2}=\omega+\alpha
r_{t-1}^{2}+\alpha_{d}r_{t-1}^{2}1_{\left\{  r_{t-1}\leq0\right\}  }%
+\beta\sigma_{t-1}^{2}.$ ${\normalsize \nu}${\small is degrees of freedom of
Student-t distributed return innovation }$\varepsilon_{t}.$ {\small The
constant term of the volatility process is not shown since it is not used in
the simulations.}\bigskip

\begin{center}

\begin{tabular}
[c]{cccc|ccc}%
{\small Sample Dates} & \multicolumn{3}{c|}{{\small 01/01/1990-12/31/2004}} &
\multicolumn{3}{|c}{{\small 01/01/1962-12/31/2004}}\\\hline
{\small Model} & {\small GARCH} & {\small TARCH} & {\small TARCH + t} &
{\small GARCH} & {\small TARCH} & {\small TARCH + t}\\\hline
${\normalsize \alpha}$ & {\small 0.044} & {\small 0.007} & {\small 0.004} &
{\small 0.107} & {\small 0.037} & {\small 0.032}\\
& $\left(  0.0073\right)  $ & $\left(  0.024\right)  $ & $\left(
0.019\right)  $ & $\left(  0.013\right)  $ & $\left(  0.016\right)  $ &
$\left(  0.0125\right)  $\\
${\normalsize \alpha}_{d}$ & {\small -} & {\small 0.112} & {\small 0.094} &
{\small -} & {\small 0.136} & {\small 0.106}\\
&  & $\left(  0.046\right)  $ & $\left(  0.033\right)  $ &  & $\left(
0.033\right)  $ & $\left(  0.0223\right)  $\\
${\normalsize \beta}$ & {\small 0.953} & {\small 0.918} & {\small 0.927} &
{\small 0.886} & {\small 0.877} & {\small 0.894}\\
& $\left(  0.0003\right)  $ & $\left(  0.0022\right)  $ & $\left(
0.0013\right)  $ & $\left(  0.0022\right)  $ & $\left(  0.0031\right)  $ &
$\left(  0.0015\right)  $\\
${\normalsize \nu}$ & - & - & {\small 8.31} & - & - & {\small 10.19}\\
&  &  & $\left(  2.75\right)  $ &  &  & $\left(  2.42\right)  $\\
{\small LogL} & {\small 1855.6} & {\small 1861.7} & {\small 1877.4} &
{\small 5347.4} & {\small 5368.6} & {\small 5397.7}\\\hline\hline
\end{tabular}

\end{center}

\section{\label{appendix_LHP_MonteCarlo}Monte Carlo Simulations}

Most of the numerical estimates for credit risk in this paper are obtained by
Monte Carlo simulation. Here we outline the simulation procedure for two such
calculations, the estimation of the pairwise default correlation coefficient,
and the estimation of the tranche losses under the LHP assumption.

The default correlation coefficient, $\rho^{d}(p)$ for the factor GARCH and
TARCH models is calculated based on the simulated factor time series and
closed form formulas of conditional default probabilities:

\begin{itemize}
\item simulate the common factor, $R_{m,T}^{{}},$ $I=10,000$ times and
normalize it to have variance 1

\item for each $p$ find $d_{T}(p)$ that solves $\frac{1}{I}%
{\displaystyle\sum\limits_{i=1}^{I}}
\Phi\left(  \frac{d_{T}-bR_{m,T}^{(i)}}{\sqrt{1-b^{2}}}\right)  =p$

\item calculate $\rho^{d}(p)=\frac{p_{12}-p^{2}}{p(1-p)}$ where $p_{12}%
=\frac{1}{I}%
{\displaystyle\sum\limits_{i=1}^{I}}
\Phi\left(  \frac{d_{T}(p)-bR_{m,T}^{(i)}}{\sqrt{1-b^{2}}}\right)  ^{2}$
\end{itemize}

The likelihood bounds in Figure \ref{Figure_DefCorrBnds} were obtained by
repeating this procedure 1000 times.

When simulating portfolio loss distribution under the large homogeneous
portfolio (LHP)\ assumption we again begin by simulating the aggregated market
factor return. The latent variables are assumed to have symmetric one factor
structure with the factor following TARCH(1,1) model. For each realization of
the market factor the portfolio loss is given by the LHP formula
(\ref{LHP_L})
\begin{equation}
L_{T}=\left(  1-\overset{\_}{R}\right)  \Phi\left(  \frac{d_{T}-bR_{m,T}%
}{\sqrt{1-b^{2}}}\right) \nonumber
\end{equation}

where

\begin{itemize}
\item $R_{m,T}=%
{\displaystyle\sum\limits_{u=1}^{T}}
r_{m,u}/std\left(
{\displaystyle\sum\limits_{s=1}^{T}}
r_{m,s}\right)  $ is a normalised return over horizon T generated using time
aggregation of simulated TARCH(1,1) returns with unconditional volatility
equal to 1

\item $d_{T}$ is calibrated so that the probability of $R_{i,T}=bR_{m,T}%
+\sqrt{1-b^{2}}E_{T}$ hitting $d_{T}$ is equal to single name default
probability $p_{T}$%
\[
P\left(  bR_{m,T}+\sqrt{1-b^{2}}E_{T}\leq d_{T}\right)  =p_{T}%
\]

\item $b$ is the factor loading that is chosen to match a given unconditional
linear correlation $\rho=b^{2}$
\end{itemize}

To calculate the expected tranche losses generated by the model and to
calibrate $d_{T}$ we use $I=100,000$ independent Monte Carlo simulations of
the factor and then use corresponding sample moments:%
\begin{gather*}
d_{T}\text{ solves }\frac{1}{I}%
{\displaystyle\sum\limits_{i=1}^{I}}
\Phi\left(  \frac{d_{T}-bR_{m,T}^{(i)}}{\sqrt{1-b^{2}}}\right)  =p_{T}\\
EL_{\left(  0,K\right]  }=\frac{1}{I}%
{\displaystyle\sum\limits_{i=1}^{I}}
f_{\left(  0,K\right]  }\left(  \left(  1-\overset{\_}{R}\right)  \Phi\left(
\frac{d_{T}-bR_{m,T}^{(i)}}{\sqrt{1-b^{2}}}\right)  \right)
\end{gather*}

\end{document}